\numberwithin{equation}{section}
\newtheorem{theorem}{Theorem}[section]
\newtheorem{proposition}[theorem]{Proposition}
\theoremstyle{definition}
\newtheorem{definition}[theorem]{Definition}
\newtheorem{remark}[theorem]{Remark}
\newtheorem{example}[theorem]{Example}
\newcommand{\Id}{\mathbbmss{1}}
\newcommand{\rmi}{ \textnormal{i}}
\newcommand{\rme}{\textnormal{e}}
\DeclareMathOperator{\Vect}{Vect}
\DeclareMathOperator{\Span}{Span}
\DeclareMathOperator{\Der}{Der}
\font\black=cmbx10 \font\sblack=cmbx7 \font\ssblack=cmbx5 \font\blackital=cmmib10  \skewchar\blackital='177
\font\sblackital=cmmib7 \skewchar\sblackital='177 \font\ssblackital=cmmib5 \skewchar\ssblackital='177
\font\sanss=cmss10 \font\ssanss=cmss8 
\font\sssanss=cmss8 scaled 600 \font\blackboard=msbm10 \font\sblackboard=msbm7 \font\ssblackboard=msbm5
\font\caligr=eusm10 \font\scaligr=eusm7 \font\sscaligr=eusm5  \font\fraktur=eufm10
\font\sfraktur=eufm7 \font\ssfraktur=eufm5 
\font\bsymb=cmsy10 scaled\magstep2
\def\all#1{\setbox0=\hbox{\lower1.5pt\hbox{\bsymb
       \char"38}}\setbox1=\hbox{$_{#1}$} \box0\lower2pt\box1\;}
\def\exi#1{\setbox0=\hbox{\lower1.5pt\hbox{\bsymb \char"39}}
       \setbox1=\hbox{$_{#1}$} \box0\lower2pt\box1\;}
\def\tx#1{{\fam0\relax#1}}
\def\sss#1{{\fam\ssfam\relax#1}}
\def\hpb#1{\setbox0=\hbox{${#1}$}
    \copy0 \kern-\wd0 \kern.2pt \box0}
\def\vpb#1{\setbox0=\hbox{${#1}$}
    \copy0 \kern-\wd0 \raise.08pt \box0}
\def\pmb#1{\setbox0\hbox{${#1}$} \copy0 \kern-\wd0 \kern.2pt \box0}
\def\pmbb#1{\setbox0\hbox{${#1}$} \copy0 \kern-\wd0
      \kern.2pt \copy0 \kern-\wd0 \kern.2pt \box0}
\def\pmbbb#1{\setbox0\hbox{${#1}$} \copy0 \kern-\wd0
      \kern.2pt \copy0 \kern-\wd0 \kern.2pt
    \copy0 \kern-\wd0 \kern.2pt \box0}
\def\pmxb#1{\setbox0\hbox{${#1}$} \copy0 \kern-\wd0
      \kern.2pt \copy0 \kern-\wd0 \kern.2pt
      \copy0 \kern-\wd0 \kern.2pt \copy0 \kern-\wd0 \kern.2pt \box0}
\def\pmxbb#1{\setbox0\hbox{${#1}$} \copy0 \kern-\wd0 \kern.2pt
      \copy0 \kern-\wd0 \kern.2pt
      \copy0 \kern-\wd0 \kern.2pt \copy0 \kern-\wd0 \kern.2pt
      \copy0 \kern-\wd0 \kern.2pt \box0}
\mathchardef\za="710B  
\mathchardef\zb="710C  
\mathchardef\zg="710D  
\mathchardef\zd="710E  
\mathchardef\zve="710F 
\mathchardef\zz="7110  
\mathchardef\zh="7111  
\mathchardef\zvy="7112 
\mathchardef\zi="7113  
\mathchardef\zk="7114  
\mathchardef\zl="7115  
\mathchardef\zm="7116  
\mathchardef\zn="7117  
\mathchardef\zx="7118  
\mathchardef\zp="7119  
\mathchardef\zr="711A  
\mathchardef\zs="711B  
\mathchardef\zt="711C  
\mathchardef\zu="711D  
\mathchardef\zvf="711E 
\mathchardef\zq="711F  
\mathchardef\zc="7120  
\mathchardef\zw="7121  
\mathchardef\ze="7122  
\mathchardef\zy="7123  
\mathchardef\zf="7124  
\mathchardef\zvr="7125 
\mathchardef\zvs="7126 
\mathchardef\zf="7127  
\mathchardef\zG="7000  
\mathchardef\zD="7001  
\mathchardef\zY="7002  
\mathchardef\zL="7003  
\mathchardef\zX="7004  
\mathchardef\zP="7005  
\mathchardef\zS="7006  
\mathchardef\zU="7007  
\mathchardef\zF="7008  
\mathchardef\zW="700A  
\mathchardef\zC="7009  
\newcommand{\be}{\begin{equation}}
\newcommand{\ee}{\end{equation}}
\newcommand{\bea}{\begin{eqnarray}}
\newcommand{\eea}{\end{eqnarray}}
\def\*{{\textstyle *}}
\newcommand{\R}{{\mathbb R}}
\newcommand{\C}{{\mathbb C}}
\newcommand{\Z}{{\mathbb Z}}
\newcommand{\s}{{\textstyle *}}
\def\Sec{\sss{Sec}}
\def\Vect{\sss{Vect}}
\def\sT{{\sss T}}
\def\xi{\tx{i}}
\def\s*{{\scriptstyle *}}
\def\cO{\mathcal{O}}
\newcommand{\beas}{\begin{eqnarray*}}
\newcommand{\eeas}{\end{eqnarray*}}
\title{Foundations of Noncommutative Carrollian Geometry via Lie-Rinehart Pairs}
\author{Andrew James Bruce}  
   \email{andrewjamesbruce@googlemail.com}
   \date{\today}
\begin{document}
\begin{abstract}
Carrollian manifolds offer an intrinsic geometric framework for the physics in the ultra-relativistic limit. The recently introduced Carrollian Lie algebroids are generalised to the setting of $\rho$-commutative geometry (also known as almost commutative geometry), where the underlying algebras commute up to a numerical factor.  Via  $\rho$-Lie-Rinehart pairs, it is shown that the foundational tenets of Carrollian geometry have analogous statements in the almost commutative world. We explicitly build two toy examples: we equip the extended quantum plane and the noncommutative $2$-torus with Carrollian structures. This opens up the rigorous study of noncommutative Carrollian geometry via almost commutative geometry.    \par
\smallskip\noindent
{\bf Keywords:}{ Carrollian Geometry;~Almost Commutative Geometry;~ Lie-Rinehart Pairs}\par
\smallskip\noindent
{\bf MSC 2020:}{~14A22;~16W50;~17B70;~83C65} 
\end{abstract}

 \maketitle

\setcounter{tocdepth}{2}
 \tableofcontents
 
\section{Introduction and Background} 
\subsection{Introduction} On the smallest of scales, it is generally believed that spacetime will radically deviate from its large-scale smooth manifold structure.  Various approaches to quantum gravity, such as string theory and loop quantum gravity, suggest that near the Planck scale, spacetime will be noncommutative in nature; hence the interest in noncommutative geometry (see Connes \cite{Connes:1994}, Madore \cite{Madore:1999}, and Manin \cite{Manin:1991}, for example). Perturbative quantum general relativity shows that noncommutative geometry is largely unavoidable in quantum gravity (see Fröb, Much \& Papadopoulos \cite{Fröb:2023}).  Typically, one is interested in constructing `relativistic quantum gravity', that is, a quantum theory that in an appropriate limit reduces to general relativity, possibly with small corrections. For a comprehensive overview of quantum gravity, the reader may consult \cite{Bambi:2024}. \par 
Over the past decade, there has been a resurgence of interest in the ultra-relativistic limit, where the speed of light approaches zero. The original works on what is now known as \emph{Carrollian physics} are those of Lévy-Leblond \cite{Lévy-Leblond:1965}, Sen Gupta \cite{SenGupta:1966}, and Henneaux \cite{Henneaux:1979}. In part, the revival in Carrollian physics is due to flat space holography, where the space boundary is defined by the endpoints of light rays, so the null direction (for an introduction, the reader may consult Nguyen \cite{Nguyen:2025}). The theory posits that a bulk gravity theory is dual to a Carrollian field theory on the boundary. For a review of the various facets of Carrollian physics, the reader may consult Bagchi et al. \cite{Bagchi:2025}.  Carrollian manifolds, so the intrinsic approach to the geometry of the ultra-relativistic limit was initiated by Duval et al. \cite{Duval:2014,Duval:2014a,Duval:2014b}, as a smooth manifold equipped with a degenerate metric whose kernel is rank-$1$. The kernel is usually assumed to have a chosen generator known as the Carroll vector field. For a review of Carrollian geometry, the reader may consult Ciambelli \& Jai-akson \cite{Ciambelli:2025}.   Carrollian geometry is the natural language for null hypersurfaces. The author recently introduced the notion of a Carrollian Lie algebroid as a framework to cope with singular Carroll distributions (see \cite{Bruce:2025}). \par
There have been almost no works on bringing together the two extremes of gravitational physics by considering `quantum/noncommutative Carrollian geometry'; see Ballesteros et al. \cite{Ballesteros:2020,Ballesteros:2020b,Ballesteros:2023}, Bose et al. \cite{Bose:2025}, and Trześniewski \cite{Trześniewski:2024}, for work in this direction, which are largely based on Inönü--Wigner contractions of $\kappa$-deformed spacetimes and their symmetries. Moreover, in $2+1$ dimensions, the Carroll group has a two-fold central extension which makes the coordinates noncommuting in the presence of a magnetic field; this is similar but distinct from the case of the $2+1$-dimensional Galilean group, see \cite{Marsot:2022,Zeng:2025}.  It is plausible that any insight into Carrollian gravity would shed light on a full theory of quantum gravity. In this paper, we make an initial study of noncommutative Carrollian geometry via graded almost commutative geometry by defining Carrollian structures on particular Lie-Rinehart pairs\footnote{The term \emph{Lie-Rinehart algebras} is also common in the literature. Historically, \emph{Lie-Cartan pairs} and \emph{Lie pseudoalgebras} have also been used.}, i.e., on the algebraic relatives of Lie algebroids.   \par 
Almost commutative algebras, or $\rho$-commutative algebras, are graded associative algebras in which the elements commute up to a numerical factor, e.g., $x y = q\, y x $, where $q$ is a non-zero element of the ground field $\mathbb{K}$. In particular, the algebras are graded by a finite abelian group $G$, and the commutation rules are controlled by a map $\rho : G \times G \rightarrow \mathbb{K}^\times$, known as the commutation factor.  Such algebras have been studied since the 1980s following the pioneering work of Rittenberg and Wyler \cite{Rittenberg:1978} and Scheunert \cite{Scheunert:1979} on generalised or colour Lie algebras. It was argued by Bongaarts and Pijls \cite{Bongaarts:1994} that almost commutative algebras offer a convenient framework to develop noncommutative differential geometry, as many of the technical difficulties with more general noncommutative algebras are absent. Using almost commutative algebras, one can largely mimic classical differential geometry following the derivation-based approach to noncommutative geometry of Dubois-Violette \cite{Dubois-Violette:1988}. For example, $\rho$-derivations of an almost commutative algebra form a module over the whole of the algebra - analogous statements for more general noncommutative algebras do not hold. Developing noncommutative geometry using almost commutative algebras allows one to largely avoid the machinery of $C^*$-algebras, von Neumann algebras, Hopf algebras, and related concepts. Many of the standard examples of noncommutative geometries are almost commutative; for example, we have supermanifolds and $\Z_2^n$-manifolds (see \cite{Covolo:2016}), quantum (super)planes, noncommutative tori, matrix algebras, and quaternionic algebras.  Thus, while concentrating on almost commutative algebras initially seems very restrictive, we do have many examples.   \par 
Lie-Rinehart pairs are the algebraic analogue of a Lie algebroid, and as the algebra need not be commutative, they offer a framework to define ``noncommutative Lie algebroids''. We work with a refined version that we refer to as \emph{$\rho$-Lie-Rinehart pairs}, where the grading of the almost commutative algebra, $\rho$-Lie bracket, and anchor map are all compatible (see Definition \ref{def:rhoLieRinPai} and \cite[Definition 3.1]{Ngakeu:2017}). In doing so, we are able to define $\rho$-connections following the notion of a Lie algebroid connection; importantly, the curvature and torsion are (graded almost commutative) tensor-objects.  From there, we define Carrollian structures on $\rho$-Lie-Rinehart pairs understood as  ``almost commutative Carrollian manifolds'' (see Definition \ref{def:CarrhoLieRinPai}).\par 
The work presented here is foundational and proposes an approach to noncommutative Carrollian geometry. However, further examples are needed to assess the full utility of Carrollian $\rho$-Lie-Rinehart pairs in physics. Nonetheless, novel concepts and new ideas in noncommutative geometry are given.

\medskip

\noindent \textbf{Outline of the paper.} We continue this section with Subsection \ref{subsec:AlComAlg} in which we recall $G$-graded algebras, almost commutative algebras, and their (graded) derivations. The main bulk of the paper is contained in Section \ref{sec:LieRinPaiCarr}. In Subsection \ref{subsec:rhoLieRinPai} we define and make an initial study of $\rho$-Lie-Rinehart pairs understood as `almost commutative Lie algebroids'. In subsection \ref{subsec:ConrhoLieRinPai}, we show that connections on  $\rho$-Lie-Rinehart pairs can be defined and that the core elements of Lie algebroid connections generalises to this setting. In Subsection \ref{subsec:CarrrhoLieRinPair}, we define Carrollian $\rho$-Lie-Rinehart pairs as graded almost commutative generalisations of Carrollian manifolds and present their core properties.  Two toy examples are presented in Subsections \ref{subsec:ToyExa}. Specifically, we equip Manin's extended quantum plane and a canonical $\rho$-Lie-Rinehart pair over the noncommutative $2$-torus, with selected Carrollian structures and compatible connections. 

\medskip 
\noindent \textbf{Conventions.}  We will take the ground field to be  $\mathbb{K} = \R$ or $\C$. All associative algebras will be taken to be unital. When dealing with graded algebras, we will take the liberty of writing expressions for homogeneous elements and understand that extension to general elements is via linearity.  
\medskip 

Heuristically, on Minkowski space-time, in the ultra-relativistic limit light cones collapse to lines and  massive particles become `frozen' in space, i.e., there is no spatial propagation. This leads to the strange notion of Carrollian causality, where  we have an absolute past and future of a massive particle defined by the spatial point and the null direction.  The challenge in the almost commutative setting is to rephrase this in an algebraic language. The reader may consult the following table, which connects the almost commutative notions found in this paper to the classical notions.
\smallskip 

\renewcommand{\arraystretch}{1.5}
\begin{tabular}[h]{ |p{200pt}||p{240pt} | }
 \hline
 \multicolumn{2}{|c|}{\textbf{Dictionary}} \\
 \hline
 Almost Commutative & Commutative \\
 \hline
 $\rho$-commutative algebra $\mathcal{A}$   & algebra of smooth functions $C^\infty(M)$   \\
 $\rho$-Lie algebroid $(\mathcal{A}, \mathfrak{g})$&  Lie algebroid $\pi :A \rightarrow M $, e.g, $\sT M$ \\
 $\mathcal{A}$-module $\mathfrak{g}$, e.g. $\rho\Der(\mathcal{A})$ & $C^\infty(M)$-module $\Sec(A)$, e.g., $\Vect(M)$ \\
 anchor map $\mathsf{a} : \mathfrak{g} \rightarrow \rho\Der(\mathcal{A})$    & anchor map $\mathsf{a} : A \rightarrow \sT M$\\
 degenerate metric $\mathcal{G} : \mathfrak{g} \times \mathfrak{g} \rightarrow \mathcal{A}$& degenerate metric $g :\Sec(A)\times \Sec(A) \rightarrow C^\infty(M)$   \\
 free cyclic submodule $\mathfrak{l} := \ker(\mathcal{G})\subset \mathfrak{g}$& trivial line subbundle $L := \ker(g) \subset A$  \\
 degree zero free generator of $\mathfrak{l}$& frame of $\Sec(L)$ \\
 Carroll distribution $\mathcal{C} := \mathsf{a}(\mathfrak{l}) \subset \rho\Der(\mathcal{A}) $& Carroll distribution $\mathcal{C} := \mathsf{a}(L) \subset \sT M $\\
 \hline
\end{tabular}
\subsection{Almost Commutative Algebras and Related Notions}\label{subsec:AlComAlg} We will draw heavily on  \cite{Bongaarts:1994,Bruce:2020,deGoursac:2012,Ngakeu:2017} and references therein. Proofs of statements in this section follow via direct computation or can be found in the literature, and so will be omitted. Let $G$ be a discrete abelian group, which we will write additively. Recall that a $G$-graded algebra is an associative unital algebra (over a field $\mathbb{K}$) such that:
\begin{enumerate}[itemsep=0.5em]
    \item $\mathcal{A}= \displaystyle \bigoplus_{a \in G}\mathcal{A}_a$\,; and
    \item $A_a \cdot A_b  \subset A_{a+b}$.
\end{enumerate}
If $f \in \mathcal{A}$ is homogeneous, we will write the $G$-degree as $|f| \in G$.  Inhomogeneous elements of $\mathcal{A}$ are sums of homogeneous elements. Assume that we are given a map $\rho : G \times G \rightarrow \mathbb{K}$ that satisfies
\begin{equation}
    \rho(a,b) =  \rho(b,a)^{-1}\,, \qquad \rho(a+b,c) = \rho(a,c)\rho(b,c)\,,
\end{equation}
for all $a,b, c \in G$.  One can directly deduce, using the additive nature of the group $G$, the following derived identities: 
\begin{equation}
\rho(a,b) \neq 0\,, \quad \rho(0,b) = 1\,, \quad \rho(c,c) = \pm 1\, , \quad \rho(a, b+c) = \rho(a, b) \rho(a,c)\,.
\end{equation}
We will refer to $\rho$ as the \emph{commutation factor}. For homogeneous elements $f,g \in \mathcal{A}$, the $\rho$-commutator is defined as 
\begin{equation}\label{eqn:RhoCom}
[f,g ] :=  fg - \rho(|f|, |g|) \, g f\,.
\end{equation}
where extension to inhomogeneous elements is via linearity. 
\begin{definition}
Let $\mathcal{A}$ be a $G$-graded algebra and let $\rho: G \times G \rightarrow \mathbb{K}$ be a fixed commutation factor. Then $\mathcal{A}$ is said to be an \emph{almost commutative algebra} or a \emph{$\rho$-commutative algebra}, if $[f,g] =0$ for all $f,g \in \mathcal{A}$.
\end{definition}
The ethos is to treat elements of a $\rho$-commutative algebra as if they are global functions on some  `almost commutative graded manifold'.  Via some abuse of nomenclature, we will refer to elements of a  $\rho$-commutative algebra as \emph{functions}, and hence our already suggestive notation.
\begin{example}
Let $M$ be a smooth manifold, then $C^\infty(M)$ is a $\rho$-commutative algebra where $G = \{ e\}$ is the trivial group and the commutation factor is $\rho(e,e)=1$.     
\end{example}
\begin{example}
Let $\mathcal{M} = (M, \cO_{\mathcal{M}})$ be a supermanifold, then the algebra of global sections $C^\infty(\mathcal{M}) := \cO_M(|M|)$ is a $\rho$-commutative algebra where $G = \Z_2$, and the commutativity factor is a plus or   minus sign, i.e., $fg = (-1)^{|f| \, |g|}\, gf$ for homogeneous sections. 
\end{example}
The construction of the $\rho$-commutator leads to the following general definition.
\begin{definition}
Fix a discrete abelian group $G$ and a commutation factor $\rho: G \times G \rightarrow \mathbb{K}$. A \emph{$\rho$-Lie algebra} is a $G$-graded vector space $\mathfrak{g} = \bigoplus_{a \in G} \,  \mathfrak{g}_a$, equipped with a $\mathbb{K}$ bilinear map  $[-,-] : \mathfrak{g} \times \mathfrak{g} \rightarrow \mathfrak{g}$, that satisfies the following:
\begin{enumerate}[itemsep=0.5em]
    \item $[\mathfrak{g}_a, \mathfrak{g}_b] \subset \mathfrak{g}_{a+b}$;\label{eqn:LieA}
    \item $[u,v] = - \rho(|u|, |v|) \, [v,u]$; and \label{eqn:LieB}
    \item $[u, [v,w]] = [[u,v], w] + \rho(|u|, |v|)\, [v, [u,w]]$, \label{eqn:LieC}
\end{enumerate}
for all homogeneous $u,v,w \in \mathfrak{g}$. Extension to inhomogeneous elements is via linearity.
\end{definition}
\begin{remark}
$\rho$-Lie algebras also appear under the names generalised Lie algebras and colour Lie algebras.
\end{remark}
The condition \eqref{eqn:LieA} states that the Lie bracket carries zero $G$-degree, and condition \eqref{eqn:LieB} is understood as $\rho$-skewsymmetry. The final condition \eqref{eqn:LieC} is the \emph{$\rho$-Jacobi identity}, here written in Loday--Leibniz form. Note that the $\rho$-Jacobi identity is the classical Jacobi identity up to the twisting by the commutation factor.
\begin{example}
The $\rho$-commutator \eqref{eqn:RhoCom} can be shown to give a $\rho$-Lie algebra.
\end{example}
\begin{example}
Lie algebras and Lie superalgebras are examples of $\rho$-Lie algebras.
\end{example}
We will be interested in a specialised notion of a derivation of a $\rho$-commutative algebra that is compatible with the $G$-grading and the commutation factor. The reader can keep in mind even and odd vector fields on a supermanifold as the primary examples.
\begin{definition}
Let $\mathcal{A}$ be a $G$-graded algebra (not necessarily $\rho$-commutative). A \emph{$\rho$-derivation} $X$ of $\mathcal{A}$ of $G$-degree $|X| \in G$ is a linear map $X : \mathcal{A}  \rightarrow \mathcal{A}$ that satisfies the $\rho$-derivation rule
$$X(fg)  =  X(f)g  + \rho(|X| , |g|) \, f X(g)\,,$$
for all (homogeneous) $f,g \in \mathcal{A}$. The $G$-graded vector space of $\rho$-derivations of $\mathcal{A}$ we denote as $\rho \Der(\mathcal{A})$.
\end{definition}
Extension to inhomogeneous $\rho$-derivations is via linearity. The $\rho$-derivation rule is the twisted version of the standard derivation rule due to the presence of the commutation factor. Derivations, so linear maps that satisfy the standard derivation rule, can be defined; however, $\rho$-derivations will play a central role in theory.  Note that the $\rho$-commutator can be naturally extended to linear maps such as $\rho$-derivations. Moreover, the reader can verify that the $\rho$-commutator of two $\rho$-derivations is again a $\rho$-derivation, and that $\rho \Der(A)$ forms a $\rho$-Lie algebra.\par 
If the $G$-graded algebra is $\rho$-commutative, then $\rho \Der(\mathcal{A})$ has a left $\mathcal{A}$-module structure given by
$(fX)(g) :=f \big( X(g)  \big)$. Note that $|fX| = |X| + |f|$. Moreover, as $\rho \Der(\mathcal{A})$ is both a $\rho$-Lie algebra and a left $\mathcal{A}$-module, there is the expected \emph{$\rho$-Leibniz rule}
\begin{equation}\label{eqn:RhoLeiDer}
[X, fY] =  X(f) Y + \rho(|X|, |f|)\, f [X,Y]\,,
\end{equation}
for all homogeneous $X \in \rho\Der(\mathcal{A})$, $Y \in\rho \Der(\mathcal{A}) $ and homogeneous $f \in \mathcal{A}$.\par 
The structure of $\rho \Der(\mathcal{A})$ for a $\rho$-commutative algebra, and especially \eqref{eqn:RhoLeiDer}, will direct our definition of $\rho$-Lie-Rinehart pairs as generalisations of graded Lie algebroids. We view $\rho$-derivations of a $\rho$-commutative algebra as vector fields on the `almost commutative manifold' whose algebra of `global functions' is $\mathcal{A}$.
\begin{example}[Vector fields on supermanifolds]
Let $\mathcal{M} = (M, \cO_{\mathcal{M}})$ be a supermanifold, then its homogeneous vector fields $ X \in \Vect(\mathcal{M})$ are $\rho$-derivations. 
\end{example}
The \emph{tensor product} of two $\rho$-commutative algebras $\mathcal{A}$ and $\mathcal{B}$ (with the same commutation factor) $\mathcal{A} \otimes_{\mathbb{K}} \mathcal{B}$, is defined as the standard tensor product of graded vector spaces, where the product is defined as
\begin{equation}
(f_1 \otimes \psi_1)(f_2 \otimes \psi_2) = \rho(|\psi_1|, |f_2|)\, (f_1 f_2 \otimes \psi_1 \psi_2)\,.
\end{equation}
The $\rho$-derivations of a tensor product of $\rho$-commutative algebras are of the from 
\begin{equation}\label{eqn:TenProdDer}
D = X\otimes L_{\psi} + L_f \otimes \mathcal{X} \in \rho\Der(\mathcal{A} \otimes_{\mathbb{K}} \mathcal{B}) \,,
\end{equation}
where $X \in \rho\Der(\mathcal{A})$, $\mathcal{X} \in \rho\Der(\mathcal{B})$, $f \in \mathcal{B}$ and $\psi \in \mathcal{A}$.  Note that $G \in |D| = |X| + |\psi|  = |\mathcal{X}| + |f|$.  Here $L$ is the left action of the algebra on itself. The module structure is given by 
$$(g \otimes \chi)D = \rho(|\chi|, |X|)\, gX \otimes L_{\chi \phi} + \rho(|\chi|, |f|)\, L_{gf} \otimes \chi \mathcal{X}\,.$$
%
%
\section{\texorpdfstring{$\rho$}{rho}-Lie-Rinehart Pairs and Carrollian Structures}\label{sec:LieRinPaiCarr}
\subsection{\texorpdfstring{$\rho$}{rho}-Lie-Rinehart Pairs}\label{subsec:rhoLieRinPai}
Following the definition of a Lie algebroid, we make the following algebraic definition of a  $\rho$-Lie-Rinehart pair in which the graded structures are compatible.  We remark that Ngakeu \cite[Definition 3.1]{Ngakeu:2017} first defined such structures in the context of Poisson almost commutative algebras.   
\begin{definition}\label{def:rhoLieRinPai}
Fix a discrete abelian group $G$ and a commutation factor $\rho : G \times G \rightarrow \mathbb{K}$. An \emph{almost commutative Lie-Rinehart pair} or a \emph{$\rho$-Lie-Rinehart pair}  is a pair $(\mathcal{A}, \mathfrak{g})$ consisting of a 
\begin{enumerate}[itemsep=0.5em]
    \item a $\rho$-commutative algebra $\mathcal{A}$; and
    \item a $\rho$-Lie algebra $\mathfrak{g}$,
\end{enumerate}
such that $\mathfrak{g}$ is a left $G$-graded $\mathcal{A}$-module, and there exists a left $\mathcal{A}$-module homomorphism 
$$\mathsf{a} : \mathfrak{g}\longrightarrow  \rho\Der(\mathcal{A})\,,$$
called the \emph{anchor}, that satisfies
\begin{subequations}
\begin{align}
& \mathsf{a}_{[u,v]} = [\mathsf{a}_u, \mathsf{a}_v]\,, \label{eqn:rhoLieHom}\\
& [u,f\,v] =  \mathsf{a}_u(f) \, v + \rho(|u|, |f|)\, f [u,v]\,,\label{eqn:rhoLeiRul}
\end{align}
\end{subequations}
for all $u, v \in \mathfrak{g}$ and $f \in \mathcal{A}$.
\end{definition}
\begin{remark}
Barreiro et al. \cite{Barreiro:2023} introduced the notion of  \emph{graded Lie-Rinehart algebras}, where the structure is given by a graded Lie algebra and a graded associative commutative algebra.  Their notion corresponds to ours when the commutation factor is the trivial commutation factor, i.e., $\rho(a,b) = 1$, for all elements $a,b \in G$. Super or $\Z_2$-graded Lie-Rinehart pairs were defined by Jadczyk \& Kastler \cite{Jadczyk:1988} with the nomenclature \emph{graded Lie-Cartan pairs}.
\end{remark}
Condition \eqref{eqn:rhoLieHom} states that the anchor map is a $\rho$-Lie algebra homomorphism, and condition \eqref{eqn:rhoLeiRul} we refer to as the \emph{$\rho$-Leibniz rule}. To keep in line with classical nomenclature, we will refer to elements of $\mathcal{A}$ as \emph{functions} and elements of $\mathfrak{g}$ as \emph{sections}.  
\begin{remark}
Although we have sections of a `noncommutative vector bundle' in mind, there is no requirement that the $\mathcal{A}$-module structure of $\mathfrak{g}$ be finitely generated and projective. 
\end{remark}
\begin{definition}\label{def:MorrhoLieRinPair}
 A \emph{morphism of $\rho$-Lie-Rinehart pairs} (over $\mathcal{A}$), $\phi : (\mathcal{A}, \mathfrak{g})\longrightarrow (\mathcal{A}, \mathfrak{g}')$ is a $G$-graded left $\mathcal{A}$-module homomorphism which is compatible with the $\rho$-Lie brackets and the anchors, i.e.,
\begin{enumerate}[itemsep=0.5em]
\item $|\phi(w)| = |w|$\,;
\item $\phi(f\, u +v) = f\, \phi(u) + \phi(v)$\,;
\item  $\phi\big([u,v]\big) =  [\phi(u), \phi(v)]'$\,;~\textnormal{and}
\item $\mathsf{a}_u = \mathsf{a}'_{\phi(u)}$\,,
\end{enumerate}
for all homogeneous $w \in \mathfrak{g}$, $u,v \in \mathfrak{g}$ and $f \in \mathcal{A}$. 
\end{definition}
\begin{example}[Lie algebroids]
 Lie algebroids and Lie superalgebroids provide a geometric class of $\rho$-Lie-Reinhart pairs where we identify $\mathfrak{g}$ as the Lie (super)algebra of global sections and $\mathcal{A}$ is the global algebra of functions on the base (super)manifold.  For the manifold case, the group is the trivial group, and the commutation factor is just the identity. For the supermanifold case, the group is $\Z_2$, and the commutation factor is a plus or minus sign.  
\end{example}
\begin{example}[Zero $\rho$-Lie-Rinehart pairs]  Any $\rho$-commutative algebra $\mathcal{A}$ has associated with it the $\rho$-Lie-Rinehart pair $(\mathcal{A}, \{ 0\})$, where $\{ 0\}$ is the zero module, and we equip this with the zero anchor and zero $\rho$-Lie bracket. This structure we refer to as the \emph{zero $\rho$-Lie-Rinehart pair of $\mathcal{A}$}.
\end{example}
\begin{example}[$\rho$-derivations]
Let $\mathcal{A}$ be a $\rho$-commutative algebra, then $(\mathcal{A}, \rho\Der(\mathcal{A}))$ is a $\rho$-Lie-Rinehart pair where the bracket is the standard $\rho$-commutator and the anchor is the identity map. Note that, in general, $\rho\Der(\mathcal{A})$ is not finitely generated and projective; nonetheless, $(\mathcal{A}, \rho\Der(\mathcal{A}))$ is considered the generalisation of a tangent Lie algebroid. 
\end{example}
\begin{example}[Foliated almost commutative algebras]\label{exa:FolAlmComAlg} Let $\mathcal{A}$ be a $\rho$-commutative algebra, together with a left  $\mathcal{A}$-submodule $\mathcal{F}_\rho \subset \rho\Der(\mathcal{A})$ such that $[\mathcal{F}_\rho , \mathcal{F}_\rho] \subset \mathcal{F}_\rho$. The pair $(\mathcal{A}, \mathcal{F}_\rho)$ is  referred to as a \emph{foliated almost commutative algebra} (see \cite[Definition 2.9]{Bruce:2020}). We thus have a $\rho$-Lie-Rinehart pair, where the anchor is the identity map and the $\rho$-Lie bracket is the standard $\rho$-Lie bracket of $\rho$-derivations.
\end{example}
\begin{example}[$\rho$-Lie algebras]
Any $\rho$-Lie algebra $\mathfrak{h}$ can be considered as a $\rho$-Lie-Rinehart pair by setting $\mathcal{A} = \mathbb{K}$ and defining the anchor to be the zero map $\mathsf{a} : \mathfrak{h} \rightarrow 0 \in \mathbb{K}$.
\end{example}
The kernel of the anchor of a $\rho$-Lie-algebroid is defined as standard 
$$\mathfrak{g}_{\mathcal{A}} = \ker(\mathsf{a}) := \big\{  u \in \mathfrak{g}~~|~~ \mathsf{a}_u = 0\big\}\,.$$
\begin{proposition}
Let $(\mathcal{A}, \mathfrak{g})$ be a $\rho$-Lie-Rinehart pair with anchor $\mathsf{a} : \mathfrak{g} \rightarrow \rho\Der(\mathcal{A})$. Then $(\mathcal{A},\mathfrak{g}_{\mathcal{A}}) $  is a  $\rho$-Lie-Rinehart pair whose anchor is the zero map.
\end{proposition}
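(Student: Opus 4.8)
The plan is to show that $\mathfrak{g}_{\mathcal{A}} = \ker(\mathsf{a})$ inherits all of the requisite structure from $\mathfrak{g}$, and that the three defining data---the $\mathcal{A}$-module structure, the $\rho$-Lie bracket, and the new (zero) anchor---are mutually compatible. Since the candidate anchor on $\mathfrak{g}_{\mathcal{A}}$ is identically zero, the two compatibility conditions \eqref{eqn:rhoLieHom} and \eqref{eqn:rhoLeiRul} will collapse to statements that are either automatic or are mere restrictions of identities already known to hold in $(\mathcal{A}, \mathfrak{g})$. The whole argument is thus a matter of checking that $\mathfrak{g}_{\mathcal{A}}$ is closed under the relevant operations.

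First I would verify that $\mathfrak{g}_{\mathcal{A}}$ is a $G$-graded left $\mathcal{A}$-submodule of $\mathfrak{g}$. As $\mathsf{a}$ is $\mathbb{K}$-linear, its kernel is a linear subspace; and since $\mathsf{a}$ is a left $\mathcal{A}$-module homomorphism, for $u \in \mathfrak{g}_{\mathcal{A}}$ and $f \in \mathcal{A}$ one has $\mathsf{a}_{fu} = f\,\mathsf{a}_u = 0$, so $fu \in \mathfrak{g}_{\mathcal{A}}$. The point requiring a little care is that the kernel respects the grading: because the anchor is a degree-preserving ($G$-graded) homomorphism, if $u = \sum_a u_a$ is the homogeneous decomposition then $\mathsf{a}_{u_a}$ is homogeneous of degree $a$, and $\sum_a \mathsf{a}_{u_a} = 0$ forces each $\mathsf{a}_{u_a} = 0$ by directness of the grading on $\rho\Der(\mathcal{A})$. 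Hence each homogeneous component of $u \in \mathfrak{g}_{\mathcal{A}}$ again lies in $\mathfrak{g}_{\mathcal{A}}$, and $\mathfrak{g}_{\mathcal{A}} = \bigoplus_a (\mathfrak{g}_{\mathcal{A}})_a$.

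Next I would show that $\mathfrak{g}_{\mathcal{A}}$ is a $\rho$-Lie subalgebra. For homogeneous $u, v \in \mathfrak{g}_{\mathcal{A}}$, condition \eqref{eqn:rhoLieHom} gives $\mathsf{a}_{[u,v]} = [\mathsf{a}_u, \mathsf{a}_v] = 0$, so $[u,v] \in \mathfrak{g}_{\mathcal{A}}$ and the bracket restricts to $\mathfrak{g}_{\mathcal{A}}$. The $\rho$-skewsymmetry and the $\rho$-Jacobi identity are then inherited verbatim from $\mathfrak{g}$, so $(\mathfrak{g}_{\mathcal{A}}, [-,-])$ is a $\rho$-Lie algebra.

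Finally I would check the $\rho$-Lie-Rinehart axioms for the pair $(\mathcal{A}, \mathfrak{g}_{\mathcal{A}})$ equipped with the restricted bracket and the zero anchor. Condition \eqref{eqn:rhoLieHom} reads $0 = [0,0]$, which holds trivially. For \eqref{eqn:rhoLeiRul} the new anchor contributes $\mathsf{a}_u(f) = 0$, so the required identity reduces to $[u, f\,v] = \rho(|u|, |f|)\, f[u,v]$; but this is exactly the restriction to $\mathfrak{g}_{\mathcal{A}}$ of \eqref{eqn:rhoLeiRul} for $(\mathcal{A}, \mathfrak{g})$ after its first term $\mathsf{a}_u(f)\,v$ is dropped (which is legitimate precisely because $u \in \ker\mathsf{a}$), so it holds automatically. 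The only mildly subtle step is the grading compatibility of the kernel in the second paragraph, which hinges on the anchor being a degree-preserving map; everything else follows directly from $\mathsf{a}$ being simultaneously an $\mathcal{A}$-module homomorphism and a $\rho$-Lie algebra homomorphism.
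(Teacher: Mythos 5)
Your proof is correct and follows essentially the same route as the paper's: closure of $\ker(\mathsf{a})$ under the $\mathcal{A}$-module structure, closure under the bracket via the anchor being a $\rho$-Lie algebra homomorphism, and the collapse of the $\rho$-Leibniz rule when the anchor vanishes. Your extra remark on the kernel respecting the $G$-grading is a point of care the paper leaves implicit, but it does not change the argument.
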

\begin{proof}\
\begin{enumerate}[itemsep=0.5em]
    \item $\ker(\mathsf{a})\subset \mathfrak{g}$ is a left $\mathcal{A}$-submodule: $\mathsf{a}_{u+ fv} = \mathsf{a}_u + \mathsf{a}_{fv} = \mathsf{a}_u + f\,\mathsf{a}_{v} =0+0=0$ for all $u,v \in \ker(\mathsf{a})$ and $f \in \mathcal{A}$.
    \item  $\ker(\mathsf{a})\subset \mathfrak{g}$ is a $\rho$-Lie algebra: $\mathsf{a}_{[u,v]} = [\mathsf{a}_u, \mathsf{a}_v] = [0,0]=0$ for all $u,v \in \ker(\mathsf{a})$. Thus, $[u,v]$ is in the kernel. 
    \item The anchor is the zero map: $[u, f\, v] =  \mathsf{a}_{u}(f) + \rho(|u|, |f|)\, f \, [u,v] = \rho(|u|, |f|)\, f \, [u,v]$, for all $u,v \in \ker(\mathsf{a})$ and $f \in \mathcal{A}$.
\end{enumerate}    
\end{proof}
We refer to $(\mathcal{A}, \mathfrak{g}_{\mathcal{A}} )$ as the \emph{isotropy $\rho$-Lie-Rinehart pair} of $(\mathcal{A}, \mathfrak{g})$.
\begin{definition}
A $\rho$-Lie-Rinehart pair $(\mathcal{A}, \mathfrak{g})$ is said to be \emph{transitive} if the anchor map $\mathsf{a} : \mathfrak{g} \rightarrow \rho \Der(\mathcal{A})$ is surjective.
\end{definition}
In analogy with the classical case of Lie algebroids, associated with a transitive $\rho$-Lie-Rinehart pair is the short exact sequence of $\rho$-Lie-Rinehart pairs 
$$(\mathcal{A}, \{ 0\})\rightarrow (\mathcal{A}, \mathfrak{g}_{\mathcal{A}} ) \rightarrow (\mathcal{A}, \mathfrak{g}) \rightarrow (\mathcal{A}, \rho\Der(\mathcal{A})) \rightarrow (\mathcal{A}, \{ 0\})\,.$$
The definition of a metric on a $\rho$-Lie-Rinehart pair is essentially that of a bilinear form on $\mathfrak{g}$. We inform the reader that our definition of a metric will encompass degenerate metrics.    
\begin{definition}
 A \emph{metric}  on a $\rho$-Lie-Rinehart pair $(\mathcal{A}, \mathfrak{g})$ is a $\mathbb{K}$-bilinear map $\mathcal{G} : \mathfrak{g} \times \mathfrak{g} \rightarrow \mathcal{A}$ that satisfies the following:
 \begin{enumerate}[itemsep=0.5em]
     \item $|\mathcal{G}(u,v)| = |u| +|v|$;
     \item $\mathcal{G}(u,v) = \rho(|u|, |v|)\, \mathcal{G}(v,u)$; and 
     \item $\mathcal{G}(f u,v ) =  f \, \mathcal{G}(u,v)$,
 \end{enumerate}
 for all (homogeneous) $u,v \in \mathfrak{g}$ and $f \in \mathcal{A}$. The kernel of $\mathcal{G}$ is defined as standard, i.e., 
 $$\ker(\mathcal{G}) := \big\{  u \in \mathfrak{g}~~|~~ \mathcal{G}(u, -) = 0\big\}\,.$$
 A metric is said to be a \emph{non-degenerate metric} if $\ker(\mathcal{G})$ is trivial, i.e., contains only the zero section of $\mathfrak{g}$, and is said to be a \emph{degenerate metric} otherwise. If the $\rho$-Lie-Rinehart pair is $(\mathcal{A}, \rho\Der(\mathcal{A}))$, then we will speak of a metric on $\mathcal{A}$. 
\end{definition}
\begin{remark}
The notion of a (non-degenerate) metric on a $\rho$-commutative algebra $\mathcal{A}$ as a bilinear map $ \mathcal{G} : \rho\Der(\mathcal{A})\times \rho\Der(\mathcal{A}) \rightarrow \mathcal{A}$ goes back to at least Ngakeu \cite{Ngakeu:2007}. This closely parallels the classical notion of a Riemannian metric on a manifold and is particularly suited to almost commutative differential geometry. Note that the existence of a non-degenerate metric requires $\mathfrak{g} \stackrel{\sim}{\rightarrow} \mathfrak{g^*}$, with a chosen metric explicitly giving the isomorphism.  However, such an isomorphism might not exist. For example, a non-degenerate (even) metric on a supermanifold only exists when there is an even number of odd coordinates. Moreover, it is possible to consider metrics that carry non-zero $G$-degree, though we will not do so in this paper. 
\end{remark}
\begin{example}[Induced metric on the isotropy $\rho$-Lie-Rinehart pair] A metric $\mathcal{G}$ on any $\rho$-Lie-Rinehart pair $(\mathcal{A}, \mathfrak{g})$ incudes a metric $\mathcal{H}$ on $(\mathcal{A},\mathfrak{g}_{\mathcal{A}})$ given by 
$$\mathcal{H}(u,v) := \mathcal{G}(\iota(u), \iota(v))\,,$$
where $\iota: \mathfrak{g}_{\mathcal{A}} \hookrightarrow \mathfrak{g}$ is the canonical inclusion.
 \end{example}
\begin{example}[Tensor product of metrics]\label{exa:TensProdG}
Suppose we have two $\rho$-commutative algebras $\mathcal{A}$ and $\mathcal{B}$, equipped with metrics $\mathcal{G}_\mathcal{A}$ and $\mathcal{G}_\mathcal{B}$, respectively. The induced metric on the tensor product is given by 
$$\mathcal{G}_{\mathcal{A} \otimes_{\mathbb{K}}\mathcal{B}}(D_1, D_2) = \rho(|\psi_1|, |\mathcal|X_2|) \, \mathcal{G}_{\mathcal{A}}(X_1, X_2)\otimes \psi_1 \psi_2 +  \rho(|\mathcal{X}_1|, |f_2|)\, f_1 f_2 \otimes \mathcal{G}_{\mathcal{B}}(\mathcal{X}_1, \mathcal{X}_2)\,,$$
where 
$$D_i = X_i\otimes L_{\psi_i} + L_{f_i} \otimes \mathcal{X}_i \in \rho\Der(\mathcal{A} \otimes_{\mathbb{K}} \mathcal{B})\,,$$
$i = 1,2$.
\end{example}
The definition of a metric leads to the notion of a \emph{covariant $\rho$-tensor} of $G$-degree $|\mathcal{T}|$ and valency $p \in \mathbb{N}$ as a $\mathbb{K}$-multilinear map 
$$\mathcal{T} : \underbrace{\mathfrak{g}\times \cdots \times \mathfrak{g}}_{p-\textnormal{factors}}  \longrightarrow \mathcal{A}\,,$$
such that 
\begin{enumerate}[itemsep=0.5em]
    \item $| \mathcal{T}(u_1, \cdots , u_p)| =  |\mathcal{T}| + |u_1| + \cdots + |u_p|$;
    \item $\mathcal{T}(u_1, \cdots , f\, u_i, \cdots, u_p) =  \rho(|\mathcal{T}| + |u_1| + \cdots + |u_{i-1}|, |f|)\, f \, \mathcal{T}(u_1, \cdots , u_i, \cdots, u_p)$, where $i \in \{ 1,2,\cdots, p\}$,
\end{enumerate}
for all homogeneous $u_1, \cdots, u_p \in \mathfrak{g}$. Such maps are the direct generalisation of covariant Lie algebroid tensors to the almost commutative setting. Thus, a metric is a covariant $\rho$-tensor of $G$-degree zero, and valency $2$. For brevity, we will refer to $\rho$-tensors. \par 
The structure of a $\rho$-Lie-Rinehart pair allows for an algebraic definition of Lie derivatives with respect to sections.  Acting on functions we define $\mathcal{L}_u f := \mathsf{a}_u(f)$, and on sections we define $\mathcal{L}_u v := [u,v]$. Clearly, via the properties of the anchor map and $\rho$-Jacobi identity, in both these cases we have $[\mathcal{L}_u , \mathcal{L}_v]= \mathcal{L}_{[u,v]}$. Moreover, the $\rho$-Leibniz rule for the Lie bracket shows that 
$$\mathcal{L}_u (f\, v) =  (\mathcal{L}_u f)\, v + \rho(|u|, |f|)\, f \, (\mathcal{L}_u v)\,.$$
The Lie derivative can then be extended to $\rho$-tensors via the $\rho$-derivation rule. Specifically for metrics, we have
$$ \mathcal{L}_u \big(  \mathcal{G}(v,w)\big) =  \big( \mathcal{L}_u \mathcal{G} \big)(v,w) + \mathcal{G}(\mathcal{L}_uv,w) + \rho(|u|, |v|)\, \mathcal{G}(v, \mathcal{L}_u w)\,,$$
which leads to 
\begin{equation}
    \big( \mathcal{L}_u \mathcal{G} \big)(v,w) := \mathsf{a}_u(\mathcal{G}(v,w)) -  \mathcal{G}([u,v], w) -  \rho(|u|, |v|)\, \mathcal{G}(v, [u, w])\,,
\end{equation}
for all homogeneous $u,v,w \in \mathfrak{g}$. Extension to non-homogeneous sections is via linearity. The properties of the Lie derivative acting on functions and sections imply that $\mathcal{L}_{[u,v]}\mathcal{G} = [\mathcal{L}_u, \mathcal{L}_v]\mathcal{G}$.
\begin{definition}
  Let $\mathcal{G}$ be a metric on a   $\rho$-Lie-Rinehart pair $(\mathcal{A}, \mathfrak{g})$. Then a section $u \in \mathfrak{g}$ is said to be a \emph{Killing section} if 
  $$\mathcal{L}_u \mathcal{G} = 0\,.$$
\end{definition}
The Killing condition for $u \in \mathfrak{g}$ can be written as 
\begin{equation}
\mathsf{a}_u(\mathcal{G}(v,w)) =  \mathcal{G}([u,v], w) +  \rho(|u|, |v|)\, \mathcal{G}(v, [u, w])\,,
\end{equation}
for all $v,w \in \mathfrak{g}$. Clearly, the $G$-graded vector space of Killing sections forms a $\rho$-Lie subalgebra of $\mathfrak{g}$.
\subsection{Connections on \texorpdfstring{$\rho$}{rho}-Lie-Rinehart Pairs}\label{subsec:ConrhoLieRinPai}
Connections are ubiquitous throughout differential geometry and mathematical physics. Linear connections on bimodules over $\rho$-commutative  algebras can be traced back to Ciupală \cite{Ciupală:2003}; for earlier related concepts see  Dubois-Violette \&  Michor \cite{Dubois-Violette:1988}.  Lie algebroids provide enough structure to generalise the notion of a linear connection in which vector fields are replaced with sections of a vector bundle. Similarly, $\rho$-Lie-Rinehart pairs provide the precise structure needed to define connections that have tensorial curvature and torsion. 
\begin{definition}
A  \emph{$\rho$-Lie-Rinehart connection} on a $\rho$-Lie-Rinehart pair $(A, \mathfrak{g})$ is a $\mathbb{K}$-bilinear map $\nabla: \mathfrak{g} \times \mathfrak{g} \rightarrow \mathfrak{g}$, that satisfies the following:
\begin{enumerate}[itemsep=0.5em]
    \item $|\nabla_u v | = |u| + |v|$;
    \item $\nabla_{fu} v  =  f \, \nabla_u v$; and 
    \item $\nabla_u fv  = \mathsf{a}_u(f)\, v + \rho(|u|, |f|)\, f \, \nabla_u v$,
\end{enumerate}
for all homogeneous $u,v \in \mathfrak{g}$ and homogenous $f \in \mathcal{A}$. Extension to non-homogeneous elements is via linearity.
\end{definition}
For brevity, we will refer to \emph{$\rho$-connections}. The existence of $\rho$-connections is not, in general, guaranteed. This is in stark contrast to Lie algebroid connections (in the smooth real category).   In particular, we do not, in general, have the standard tools such as a partition of unity and coordinates.\par 
The curvature and torsion maps of a $\rho$-connection can be defined following the standard definitions, i.e.,
\begin{subequations}
\begin{align}
    & R_\nabla(u,v)w := \nabla_u \nabla_v w - \rho(|u|, |v|) \, \nabla_v \nabla_u w - \nabla_{[u,v]}w\,, \\
    & T_\nabla(u,v) := \nabla_u v - \rho(|u|, |v|) \, \nabla_v u  -  [u,v]\,,
\end{align}
\end{subequations}
respectively. Clearly, both the curvature and torsion are $\rho$-skewsymmetric, that is 
$$R_\nabla(u,v) w  = -\rho(|u|, |v|)\, R_\nabla(v,u) w\,, \qquad T_\nabla(u,v) =  - \rho(|u|, |v|)\, T_\nabla(v,u)\,.$$
\begin{proposition}
The curvature and torsion of a $\rho$-Lie-Rinehart connection are $\rho$-tensors.
\end{proposition}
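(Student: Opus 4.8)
The plan is to verify, for each of $R_\nabla$ and $T_\nabla$, the two defining conditions of a $\rho$-tensor, read here in the evident $\mathfrak{g}$-valued sense (the target is $\mathfrak{g}$ rather than $\mathcal{A}$, but the $G$-degree condition and the $\rho$-homogeneity in each slot are verbatim the same): the degree condition and $\rho$-homogeneity in every argument. The $\mathbb{K}$-multilinearity is immediate from the $\mathbb{K}$-bilinearity of $\nabla$ and of the $\rho$-Lie bracket, so the entire content lies in the homogeneity. First I would record the degrees: from the degree condition in the definition of a $\rho$-connection together with the grading of the bracket, one reads off $|T_\nabla(u,v)| = |u|+|v|$ and $|R_\nabla(u,v)w| = |u|+|v|+|w|$, so both maps carry $G$-degree zero and the relevant tensor degree is $|\mathcal{T}|=0$.

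For the torsion, the already-established $\rho$-skewsymmetry reduces the check to a single slot. Expanding $T_\nabla(u,fv)$ using the $\rho$-Leibniz rule of $\nabla$ for $\nabla_u(fv)$, the $\mathcal{A}$-linearity $\nabla_{fv}u = f\nabla_v u$, and the bracket $\rho$-Leibniz rule \eqref{eqn:rhoLeiRul} for $[u,fv]$, I expect the two anchor contributions $\mathsf{a}_u(f)\,v$ to cancel; after simplifying the survivors with $\rho(|u|,|v|+|f|)=\rho(|u|,|v|)\rho(|u|,|f|)$ this should yield $T_\nabla(u,fv)=\rho(|u|,|f|)\,f\,T_\nabla(u,v)$, exactly the homogeneity required for a valency-$2$, degree-zero tensor.

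For the curvature I would verify the third slot directly and deduce the first two from $\rho$-skewsymmetry in $(u,v)$. Expanding $R_\nabla(u,v)(fw)$, the iterated $\rho$-Leibniz rule produces first-order anchor terms in $\mathsf{a}_u(f)$ and $\mathsf{a}_v(f)$, second-order terms in $\mathsf{a}_u(\mathsf{a}_v(f))$ and $\mathsf{a}_v(\mathsf{a}_u(f))$, and from $\nabla_{[u,v]}(fw)$ a term $\mathsf{a}_{[u,v]}(f)\,w$. The first-order terms should cancel in pairs after using $\rho(a,b+c)=\rho(a,b)\rho(a,c)$ and $\rho(a,b)\rho(b,a)=1$, while the second-order terms cancel against $\mathsf{a}_{[u,v]}(f)\,w$ precisely because the anchor is a $\rho$-Lie algebra homomorphism \eqref{eqn:rhoLieHom}, i.e. $\mathsf{a}_{[u,v]} = \mathsf{a}_u\mathsf{a}_v - \rho(|u|,|v|)\,\mathsf{a}_v\mathsf{a}_u$. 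What survives is $\rho(|u|+|v|,|f|)\,f\,R_\nabla(u,v)w$, the correct homogeneity. For the first slot I would expand $R_\nabla(fu,v)w$ using $\mathcal{A}$-linearity on the outer connections together with the identity $[fu,v]=f[u,v]-\rho(|f|+|u|,|v|)\,\mathsf{a}_v(f)\,u$ (obtained from $\rho$-skewsymmetry and \eqref{eqn:rhoLeiRul}); the lone anchor term then cancels, leaving $f\,R_\nabla(u,v)w$, and the second slot follows from $\rho$-skewsymmetry.

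The main obstacle is the second-order cancellation in the third slot of the curvature: the two iterated anchor terms must recombine into $\mathsf{a}_{[u,v]}(f)$ with exactly matching commutation factors, which hinges on the homomorphism property \eqref{eqn:rhoLieHom} together with scrupulous bookkeeping via $\rho(a+b,c)=\rho(a,c)\rho(b,c)$ and $\rho(a,b)\rho(b,a)=1$. Everything else is routine rearrangement of the commutation factors.
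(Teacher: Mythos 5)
Your proposal is correct and takes essentially the same route as the paper's proof: both verify $\rho$-homogeneity slot by slot using the connection and bracket $\rho$-Leibniz rules, cancel the second-order anchor terms in the third curvature slot via the homomorphism property \eqref{eqn:rhoLieHom}, and invoke $\rho$-skewsymmetry to halve the work (the paper checks the second curvature slot directly and deduces the first, whereas you do the reverse --- an immaterial difference). Your observation that the $\rho$-tensor conditions must be read in the $\mathfrak{g}$-valued sense is a point the paper leaves implicit.
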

\begin{proof} This follows from a minor modification of the standard proofs, taking care with the commutation factor (see \cite[Theorem 1]{Ciupală:2003}). For completeness, we include the calculations.\
\begin{description}
    \item[Curvature] For shorthand, we will simply denote the curvature by $R$.
    \begin{align*}
        R(u,fv)w & = \nabla_u \nabla_{fv} w - \rho(|u|, |v| + |f|)\, \nabla_{fv} \nabla_u w - \nabla_{[u, f v]}w\\
        &= \nabla_u (f \, \nabla_v w) - \rho(|u|, |v| + |f|) \, f \, \nabla_v \nabla_u w - \nabla_{\mathsf{a}_u(f)v}w - \rho(|u|, |f|)\, f \, \nabla_{[u,v]}w\\
        &= \mathsf{a}_u(f) \, \nabla_v w + \rho(|u|, |f|)\, f \, \nabla_u \nabla_v w - \rho(|u|, |v| + |f|)\, f\, \nabla_v \nabla_u w \\
        &- \mathsf{a}_u(f)\, \nabla_v w - \rho(|u|, |f|)\, f \, \nabla_[u,v]w\\
        &= \rho(|u|, |f|) \, f \, R(u,v)w\,.
    \end{align*}
    The $\rho$-skewsymmetry of the curvature establishes that $R(fu, v) = f R(u,v)$.  As for the third argument, we have:
    \begin{align*}
     &   R(u,v)f\, w = \nabla_u \nabla_{v} f\,w - \rho(|u|, |v|)\, \nabla_{v} \nabla_u f\,w - \nabla_{[u, v]} f\,w\\
        &= \mathsf{a}_u (\mathsf{a}_v(f))w+ \rho(|u|, |v|+|f|)\, \mathsf{a}_v(f) \nabla_u w + \rho(|v|, |f|)\, \mathsf{a}_u(f)\nabla_v w + \rho(|u|+ |v|, |f|)\, f \nabla_u \nabla_v w\\
        &- \rho(|u|, |v|)\, \mathsf{a}_v(\mathsf{a}_u(f)) w - \rho(|v|, |f|)\, \mathsf{a}_u(f) \nabla_v w - \rho(|u|, |v|+ |f|)\, \mathsf{a}_v(f) \nabla_u w\\
        &- \rho(|u|+ |v|, |f|)\rho(|u|, |v|)\, f \, \nabla_v \nabla_u w-  \mathsf{a}_{[u,v]}(f) \, w - \rho(|u|+ |v|, |f|)\, f \, \nabla_{[u,v]}w\,.
    \end{align*} 
    Cancelling terms and using $[\mathsf{a}_u,\mathsf{a}_v] = \mathsf{a}_{[u,v]}$, we obtain $R(u,v)f\, w =  \rho(|u|+ |v|, |f|)\, f \, R(u,v,)w$.\medskip 
    \item[Torsion] For shorthand, we will simply denote the torsion by $T$.
    \begin{align*}
        T(u,fv) &= \nabla_u fv - \rho(|u|, |v| + |f|)\, \nabla_{fv} u - [u, fv]\\
        &= \mathsf{a}_u(f)v + \rho(|u|, |f|)\, f \, \nabla_u v -  \rho(|u|, |v| + |f|) \, f\, \nabla_v u - \mathsf{a}_u(f) v -  \rho(|u|, |f|)\, f\, [u,v]\\
        & = \rho(|u|, |f|)\, f \,T(u,v)\,.
    \end{align*}
    The $\rho$-skewsymmetry of the torsion quickly establishes that $T(fu, v) = f \, T(u,v)$. Hence, the torsion is tensorial in nature. 
\end{description}
\end{proof}
We can extend $\rho$-connections to act on $\rho$-tensors using the $\rho$-derivation rule. We first define a $\rho$-connection acting on a function as a directional derivative using the anchor, i.e, $\nabla_u f := \mathsf{a}_u(f)$.  Then, using the metric as an example, we have
$$\nabla_u \big( \mathcal{G}(v,u)\big) =  \big( \nabla_u \mathcal{G}\big)(v,w) + \mathcal{G}(\nabla_u v, w) + \rho(|u|, |v|)\, \mathcal{G}(v, \nabla_u w)\,,$$
and so we define
\begin{equation}\label{eqn:CovDerMet}
  \big( \nabla_u \mathcal{G}\big)(v,w) := \mathsf{a}_u \big( \mathcal{G}(v,u)\big)-  \mathcal{G}(\nabla_u v, w) - \rho(|u|, |v|)\, \mathcal{G}(v, \nabla_u w)\,,
\end{equation}
for all $u,v,w \in \mathfrak{g}$.
\begin{definition}
Let $(\mathcal{A}, \mathfrak{g})$ be a $\rho$-Lie-Rinehart pair equipped with a metric $\mathcal{G}$. Then a $\rho$-connection $\nabla$ is said to be \emph{metric compatible} if $\nabla \mathcal{G} =0$.
\end{definition}
If a $\rho$-connection is both torsion-free and metric compatible, then via minor modification of the proof of  \cite[Theorem 3.6]{Ngakeu:2007}, we have the \emph{Koszul formula}
\begin{align}\label{eqn:KosForm}
    2 \mathcal{G}(\nabla_u v,w) &= \mathsf{a}_u \big( \mathcal{G}(v,w)\big) + \mathcal{G}([u,v],w)\\ \nonumber 
    &+  \rho(|u|, |v|+|w|)\, \left(\mathsf{a}_v \big( \mathcal{G}(w,u)\big) - \mathcal{G}([v,w],u)\right)\\ \nonumber 
    &-  \rho(|w|, |u|+|v|)\, \left(\mathsf{a}_w \big( \mathcal{G}(u,v)\big) - \mathcal{G}([w,u],v)\right)\,. 
\end{align}
Furthermore, \cite[Theorem 3.6]{Ngakeu:2007} (also see \cite[Theorem 1]{Bruce:2020b}) leads to the following.
\begin{theorem}
Let $(\mathcal{A}, \mathfrak{g})$ be a $\rho$-Lie-Rinehart pair equipped with a non-degenerate metric $\mathcal{G}$. Then there exists a unique torsion-free and metric compatible $\rho$-connection, referred to as the Levi-Civita $\rho$-connection.  
\end{theorem}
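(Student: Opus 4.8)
The plan is to prove both existence and uniqueness at once through the Koszul formula \eqref{eqn:KosForm}, obtaining uniqueness as a direct consequence of the formula and existence by solving it. Throughout I read non-degeneracy, as in the Remark following the definition of a metric, as the statement that the induced \emph{musical} morphism $\flat : \mathfrak{g} \to \mathfrak{g}^*$, $u \mapsto \mathcal{G}(u,-)$, is an isomorphism of left $\mathcal{A}$-modules, where $\mathfrak{g}^*$ denotes the module of covariant $\rho$-tensors of valency $1$. This is precisely the hypothesis that makes the inversion step legitimate.

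For uniqueness I would suppose $\nabla$ is torsion-free and metric compatible and derive \eqref{eqn:KosForm}. Unwinding $(\nabla_u \mathcal{G})(v,w)=0$ from \eqref{eqn:CovDerMet} gives $\mathsf{a}_u(\mathcal{G}(v,w)) = \mathcal{G}(\nabla_u v,w) + \rho(|u|,|v|)\,\mathcal{G}(v,\nabla_u w)$, and I would write the two further copies obtained by cyclically relabelling $(u,v,w)$, each carrying the commutation factor dictated by degree bookkeeping (namely $\rho(|u|,|v|+|w|)$ and $\rho(|w|,|u|+|v|)$, exactly as displayed in the three lines of \eqref{eqn:KosForm}). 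Combining the first two and subtracting the third, then repeatedly using the torsion-free identity $\nabla_x y - \rho(|x|,|y|)\,\nabla_y x = [x,y]$ to convert every ``wrongly ordered'' covariant derivative into a bracket, all terms except $2\,\mathcal{G}(\nabla_u v,w)$ reorganise into the right-hand side of \eqref{eqn:KosForm}. Since this holds for every $w$ and $\flat$ is injective, $\nabla_u v$ is pinned down, so at most one such connection exists.

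For existence I would run the computation in reverse. Define $\omega_{u,v} : \mathfrak{g} \to \mathcal{A}$ by letting $\omega_{u,v}(w)$ be one-half of the right-hand side of \eqref{eqn:KosForm}. The central verification is that $\omega_{u,v}$ is a covariant $\rho$-tensor of valency $1$ and $G$-degree $|u|+|v|$, i.e. that it has the correct degree and satisfies $\omega_{u,v}(fw) = \rho(|u|+|v|,|f|)\,f\,\omega_{u,v}(w)$. Granting this, non-degeneracy furnishes a unique section $\nabla_u v := \flat^{-1}(\omega_{u,v})$, characterised by $\mathcal{G}(\nabla_u v,w)=\omega_{u,v}(w)$. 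I would then check the connection axioms: $\mathbb{K}$-bilinearity and the degree condition $|\nabla_u v|=|u|+|v|$ are immediate from the formula and the fact that $\flat$ preserves degree; property (2), $\nabla_{fu}v = f\,\nabla_u v$, reduces via left-linearity of $\flat$ to $\omega_{fu,v}=f\cdot\omega_{u,v}$, which follows from left-linearity of $\mathsf{a}$ together with the cancellation of the $\mathsf{a}_{(-)}(f)$ corrections using $\rho$-symmetry of $\mathcal{G}$; and property (3), the Leibniz rule $\nabla_u(fv) = \mathsf{a}_u(f)\,v + \rho(|u|,|f|)\,f\,\nabla_u v$, follows by inserting $fv$ into $\omega$ and tracking the anchor and bracket corrections through \eqref{eqn:rhoLeiRul}. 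Finally, torsion-freeness and metric compatibility hold by design: comparing the instances of \eqref{eqn:KosForm} for $\nabla_u v$ and $\nabla_v u$ collapses $\mathcal{G}(\nabla_u v - \rho(|u|,|v|)\nabla_v u,w)$ to $\mathcal{G}([u,v],w)$, forcing $T_\nabla=0$, while adding the instance for $\mathcal{G}(\nabla_u v,w)$ to the twisted instance for $\mathcal{G}(v,\nabla_u w)$ recovers \eqref{eqn:CovDerMet} with $\nabla\mathcal{G}=0$.

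The main obstacle will be the tensoriality check for $\omega_{u,v}$ in the $w$-slot, together with the closely related property (3) in the $v$-slot: when $w$ (respectively $v$) is scaled by $f$, the three anchor terms $\mathsf{a}_u(\mathcal{G}(v,w))$, $\mathsf{a}_v(\mathcal{G}(w,u))$, $\mathsf{a}_w(\mathcal{G}(u,v))$ and the three bracket terms each spawn derivative contributions of the form $\mathsf{a}_{(-)}(f)$, and one must verify, with all commutation factors in their correct positions, that the unwanted contributions cancel in pairs while the surviving terms reassemble into $\rho(|u|+|v|,|f|)\,f\,\omega_{u,v}(w)$. Keeping the commutation factors consistent across the cyclic relabelling, where each $\rho$ must be resolved using $\rho(a+b,c)=\rho(a,c)\rho(b,c)$ and $\rho(a,b)\rho(b,a)=1$, is where all the care is needed; the remainder is bookkeeping modelled on the classical argument and on \cite[Theorem 3.6]{Ngakeu:2007}.
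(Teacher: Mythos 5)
Your proposal is correct and follows essentially the same route as the paper, which establishes the Koszul formula \eqref{eqn:KosForm} by minor modification of the classical argument and then deduces existence and uniqueness from it, citing \cite[Theorem 3.6]{Ngakeu:2007}. Your explicit reading of non-degeneracy as the musical map $\flat:\mathfrak{g}\to\mathfrak{g}^*$ being an isomorphism (rather than merely $\ker(\mathcal{G})=\{0\}$, which only gives injectivity and hence uniqueness) is exactly the strengthening needed for the existence half, and it matches the paper's own remark that a non-degenerate metric furnishes $\mathfrak{g}\stackrel{\sim}{\rightarrow}\mathfrak{g}^*$.
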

\subsection{Carrollian Structures on \texorpdfstring{$\rho$}{rho}-Lie-Rinehart Pairs}\label{subsec:CarrrhoLieRinPair} Generalising the notion of a Carrollian Lie algebroid (see \cite{Bruce:2025}), we have the following definition.
\begin{definition}\label{def:CarrhoLieRinPai}
A \emph{Carrollian $\rho$-Lie-Rinehart pair} is the quadruple $(\mathcal{A}, \mathfrak{g}, \mathcal{G}, \mathfrak{l})$, where $(\mathcal{A}, \mathfrak{g})$ is a $\rho$-Lie-Rinehart pair, $\mathfrak{l}$ is a free cyclic submodule of $\mathfrak{g}$ generated by a $G$-degree $0$ section, and $\mathcal{G}$ is a degenerate metric on the $\rho$-Lie-Rinehart pair such that $\ker(\mathcal{G}) = \mathfrak{l}$.\par 
A \emph{morphism of Carrollian $\rho$-Lie-Rinehart pairs} (over $\mathcal{A}$) is an isomorphism of $\rho$-Lie-Rinehart pairs (see Definition \ref{def:MorrhoLieRinPair}) $\phi : (\mathcal{A}, \mathfrak{g}) \rightarrow (\mathcal{A}, \mathfrak{g}')$, such that $\phi$ 
\begin{enumerate}[itemsep=0.5em]
\item is an isometry, i.e., $\mathcal{G}(u,v) =  \mathcal{G}'(\phi(u), \phi(v))$ for all $u,v \in \mathfrak{g}$; and
\item respects the kernels, i.e., $\phi(\mathfrak{l}) = \mathfrak{l}'$.
\end{enumerate}
\end{definition}
From the definition, $\mathfrak{l}$ is cyclic, meaning it  is generated by a section $\sigma \in \mathfrak{g_0}$, so every element $s \in \mathfrak{l}$ is of the form $s = f \, \sigma$, where $f \in \mathcal{A}$. Note that $\mathfrak{l} \subset \mathfrak{g}$ is $G$-graded and not purely of $G$-degree $0$.  Moreover, the generator of $\mathfrak{l}$ is far from unique.  In particular, as $\mathcal{A}$ is unital, $\mathbb{K}$ embeds naturally into $\mathcal{A}$ 
$$\mathbb{K}\hookrightarrow \mathcal{A}_0\,, \qquad k \mapsto k \, \Id_{\mathcal{A}}\,.$$
Functions of the form $f_0  = k \,\Id_{\mathcal{A}}$ are referred to as \emph{field elements}. If $k \in \mathbb{K}\setminus \{0\}$, then  such field elements are units, i.e.,  $f_0 = k\,\Id_{\mathcal{A}}$ has an inverse $f_0^{-1} = k^{-1} \, \Id_{\mathcal{A}}$. Thus, the algebra $\mathcal{A}$ has at least unit field elements. Then any other $\sigma' = f_0 \, \sigma$, where $f_0 \in \mathcal{A}_0$ is a unit will serve as a free generator for $\mathfrak{l}$.
\begin{proposition}
Let  $(\mathcal{A}, \mathfrak{g}, \mathcal{G}, \mathfrak{l})$, be a Carrollian $\rho$-Lie-Rinehart pair. Then $(\mathcal{A},\mathfrak{l})$ is a $\rho$-Lie-Rinehart pair.
\end{proposition}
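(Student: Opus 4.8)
The plan is to equip $\mathfrak{l}$ with the structures inherited by restriction from $(\mathcal{A},\mathfrak{g})$: the left $\mathcal{A}$-module structure it already carries as a submodule, the restriction of the $\rho$-Lie bracket of $\mathfrak{g}$, and the restricted anchor $\mathsf{a}|_{\mathfrak{l}}$. Since $\mathfrak{l}$ is by hypothesis a $G$-graded left $\mathcal{A}$-submodule of $\mathfrak{g}$, the module axioms are immediate, and $\mathsf{a}|_{\mathfrak{l}}$ is a left $\mathcal{A}$-module homomorphism into $\rho\Der(\mathcal{A})$ as the restriction of one. The $G$-degree condition, $\rho$-skewsymmetry, the $\rho$-Jacobi identity, and the anchor compatibility conditions \eqref{eqn:rhoLieHom}--\eqref{eqn:rhoLeiRul} then all hold on $\mathfrak{l}$ because they hold for all elements of $\mathfrak{g}$. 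The entire statement therefore reduces to a single point: that the bracket closes, i.e. $[\mathfrak{l},\mathfrak{l}] \subseteq \mathfrak{l}$, so that the restricted bracket is genuinely a map $\mathfrak{l}\times\mathfrak{l}\to\mathfrak{l}$.

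Establishing this closure is the main obstacle, and I would resolve it using the cyclic structure rather than the kernel description of $\mathfrak{l}$. Fix the degree-zero generator $\sigma \in \mathfrak{g}_0$, so that every element of $\mathfrak{l}$ is of the form $f\sigma$ with $f \in \mathcal{A}$. First I would observe that $[\sigma,\sigma] = 0$: since $|\sigma| = 0$ and $\rho(0,0) = 1$, $\rho$-skewsymmetry gives $[\sigma,\sigma] = -[\sigma,\sigma]$, and $2[\sigma,\sigma] = 0$ forces $[\sigma,\sigma]=0$ over $\mathbb{K} = \R$ or $\C$. Next, applying the $\rho$-Leibniz rule \eqref{eqn:rhoLeiRul} (and its first-slot counterpart obtained from $\rho$-skewsymmetry), together with $\rho(a,0)=\rho(0,a)=1$ and the $\mathcal{A}$-linearity $\mathsf{a}_{f\sigma} = f\,\mathsf{a}_\sigma$ of the anchor, one finds $[f\sigma,\sigma] = -\mathsf{a}_\sigma(f)\,\sigma$ and consequently
\begin{equation*}
[f\sigma, g\sigma] = \big(f\,\mathsf{a}_\sigma(g) - \rho(|f|,|g|)\, g\,\mathsf{a}_\sigma(f)\big)\,\sigma,
\end{equation*}
which manifestly lies in $\mathcal{A}\sigma = \mathfrak{l}$. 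This gives the required closure.

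With closure in hand, $\mathfrak{l}$ is a $\rho$-Lie subalgebra of $\mathfrak{g}$, and every remaining $\rho$-Lie-Rinehart axiom for $(\mathcal{A},\mathfrak{l})$ follows by restriction, completing the argument. It is worth flagging that this proof uses only that $\mathfrak{l}$ is a cyclic submodule generated by a degree-zero section; the defining property $\mathfrak{l} = \ker(\mathcal{G})$ is not needed for this particular statement, which I would record as a remark.
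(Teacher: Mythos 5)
Your proposal is correct and follows essentially the same route as the paper: both reduce the statement to the closure $[\mathfrak{l},\mathfrak{l}]\subset\mathfrak{l}$, establish $[\sigma,\sigma]=0$ from $\rho$-skewsymmetry and $|\sigma|=0$, and derive the identical formula $[f\sigma,g\sigma]=\big(f\,\mathsf{a}_\sigma(g)-\rho(|f|,|g|)\,g\,\mathsf{a}_\sigma(f)\big)\sigma$. Your closing observation that only the free cyclic structure of $\mathfrak{l}$ (and not $\mathfrak{l}=\ker(\mathcal{G})$) is used is accurate and a worthwhile addition, but the argument itself is the paper's.
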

\begin{proof}
We need to argue that $[\mathfrak{l}, \mathfrak{l}] \subset \mathfrak{l}$.  As $\mathfrak{l}$ is a free cyclic submodule with a generator $\sigma$ of $G$-degree $|\sigma| = 0$, we see that the $\rho$-skewsymmetry of the bracket implies that $[\sigma, \sigma] =  - [\sigma, \sigma]= 0$. A quick calculation then shows that, for any $s = f\, \sigma$ and $t = g \, \sigma \in \mathfrak{l}$,
$$[s,t] =  [f\sigma, g \sigma] = \big(f\, \mathsf{a}_\sigma(g) - \rho(|f|, |g|) \, g \, \mathsf{a}_\sigma(f)\big) \, \sigma \in \mathfrak{l}\,. $$
Restriction of anchor to $\mathfrak{l} \subset \mathfrak{g}$ is well-defined and satisfies the required $\rho$-Leibniz rule \eqref{eqn:rhoLeiRul}.
\end{proof}
The condition that $\mathfrak{l}$ is a free cyclic module is akin to the requirement that the kernel of the degenerate metric of a Carrollian Lie algebroid be a trivial line bundle. Or, in relation to the standard definition of a Carrollian manifold, $\mathfrak{l}$ has been chosen to mimic the existence of a nowhere vanishing Carroll vector field.  The structure of a Carrollian $\rho$-Lie-Rinehart pair defines a foliation of the algebra $\mathcal{A}$ via the anchor map; this is the algebraic analogue of the Carroll distribution/foliation of a Carrollian manifold. 
\begin{definition}
Let  $(\mathcal{A}, \mathfrak{g}, \mathcal{G}, \mathfrak{l})$ be a Carrollian $\rho$-Lie-Rinehart pair. The associated \emph{Carroll distribution} is left $\mathcal{A}$-module $\mathcal{C} := \mathsf{a}(\mathfrak{l})\subset \rho \Der(\mathcal{A})$.
\end{definition}
We have the direct analogue of the classical result that the Carroll distribution is involutive.
\begin{proposition}
   Let  $(\mathcal{A}, \mathfrak{g}, \mathcal{G}, \mathfrak{l})$ be a Carrollian $\rho$-Lie-Rinehart pair. The associated Carroll distribution $\mathcal{C} \subset \rho\Der(\mathcal{A})$ is involutive, i.e., $[\mathcal{C}, \mathcal{C}]\subset \mathcal{C}$.
\end{proposition}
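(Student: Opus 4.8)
The plan is to reduce the involutivity of $\mathcal{C}$ directly to the two facts already available: that the anchor is a $\rho$-Lie algebra homomorphism, equation \eqref{eqn:rhoLieHom}, and that $[\mathfrak{l}, \mathfrak{l}] \subset \mathfrak{l}$, which was established in the immediately preceding proposition. The point is that involutivity of the image of $\mathfrak{l}$ under the anchor is nothing more than the homomorphism property transporting the closure of $\mathfrak{l}$ under its own bracket.

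First I would note that, since $\mathsf{a}$ is a left $\mathcal{A}$-module homomorphism, the Carroll distribution $\mathcal{C} = \mathsf{a}(\mathfrak{l})$ is a left $\mathcal{A}$-submodule of $\rho\Der(\mathcal{A})$, and, being the image of $\mathfrak{l}$, every element of $\mathcal{C}$ is of the form $\mathsf{a}_s$ for some $s \in \mathfrak{l}$. Hence it suffices to examine the bracket of two such elements. Taking $D_1 = \mathsf{a}_s$ and $D_2 = \mathsf{a}_t$ with $s,t \in \mathfrak{l}$, the homomorphism property \eqref{eqn:rhoLieHom} gives
\[
[D_1, D_2] = [\mathsf{a}_s, \mathsf{a}_t] = \mathsf{a}_{[s,t]}.
\]
By the preceding proposition $[s,t] \in [\mathfrak{l}, \mathfrak{l}] \subset \mathfrak{l}$, and therefore $\mathsf{a}_{[s,t]} \in \mathsf{a}(\mathfrak{l}) = \mathcal{C}$. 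Extending by $\mathbb{K}$-bilinearity of the $\rho$-commutator and linearity of the anchor to inhomogeneous elements then yields $[\mathcal{C}, \mathcal{C}] \subset \mathcal{C}$, which is the claim.

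In truth there is no serious obstacle remaining at this stage: all the genuine content was already absorbed into the previous proposition, where the closure $[\mathfrak{l}, \mathfrak{l}] \subset \mathfrak{l}$ was verified using $[\sigma, \sigma] = 0$ for the degree-$0$ generator together with the $\rho$-Leibniz rule \eqref{eqn:rhoLeiRul}. The only point meriting a moment's care is the observation that $\mathcal{C}$, as the image of a module homomorphism, is exactly the set of $\mathsf{a}_s$ with $s \in \mathfrak{l}$, so that checking the bracket on these elements is enough; one should resist the temptation to argue via $\mathcal{A}$-bilinearity of the bracket on $\rho$-derivations, since that bracket obeys only the twisted $\rho$-Leibniz rule \eqref{eqn:RhoLeiDer} rather than $\mathcal{A}$-linearity in each slot.
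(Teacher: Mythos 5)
Your proof is correct and follows essentially the same route as the paper: both arguments note that every element of $\mathcal{C}$ is of the form $\mathsf{a}_s$ with $s \in \mathfrak{l}$, apply the homomorphism property $[\mathsf{a}_s,\mathsf{a}_t]=\mathsf{a}_{[s,t]}$, and invoke the closure $[\mathfrak{l},\mathfrak{l}]\subset\mathfrak{l}$ from the preceding proposition. Your version merely makes the appeal to that closure (which the paper leaves implicit) and the extension to inhomogeneous elements explicit.
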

\begin{proof}
As $\mathcal{C}$ is defined as the image of the anchor map restricted to $\mathfrak{l}$, all sections are of the form $\mathsf{a}_s$ for some (not necessarily unique) $s \in \mathfrak{l}$. Then as the anchor map is a homomorphism of $\rho$-Lie algebras, we observe that $[\mathsf{a}_s, \mathsf{a}_t] = \mathsf{a}_{[s,t]} \in \mathcal{C}$ for any and all $s,t \in \mathfrak{l}$
\end{proof}
In other words, $(\mathcal{A}, \mathcal{C})$ is a \emph{foliated almost commutative algebra} (see Example \ref{exa:FolAlmComAlg} and \cite[Definition 2.9]{Bruce:2020}). In the classical setting of Carrollian Lie algebroids, the notion of a Carroll distribution being singular is defined by the rank of the distribution changing point-wise. The classical Carroll distribution is of maximal rank $1$; in general, it may fluctuate between rank-$1$ and rank-$0$. In the current setting, where we don't have points, we need the following algebraic analogue. The \emph{primitive ideal of $\mathcal{C}$} is defined as
$$P_{\mathcal{A}}(\mathcal{C}) : = \big \{ f \in \mathcal{A} ~~|~~ f\, X = 0 ~\textnormal{for all} ~ X \in \mathcal{C} \big\}\,,$$
and captures the notion of a distribution being singular or non-singular (regular).
\begin{definition}
Let  $(\mathcal{A}, \mathfrak{g}, \mathcal{G}, \mathfrak{l})$ be a Carrollian $\rho$-Lie-Rinehart pair and $\mathcal{C}$ be its associated Carroll distribution. Then $\mathcal{C}$ is said to be a \emph{non-singular Carroll distribution}  if its primitive ideal is trivial, i.e., $P_{\mathcal{A}}(\mathcal{C}) = \{ 0 \}$, and is said to be a \emph{singular Carroll distribution} otherwise.
\end{definition}
Geometrically, we think of the Carroll distribution $\mathcal{C} \subset \rho\Der(\mathcal{A})$ as defining a ``null direction'' in $\mathcal{A}$;  this is the noncommutative analogue of the `absolute time' found in standard Carrollian geometry.  The interpretation of a non-singular Carroll distribution is that its generator $\mathsf{a}_\sigma$ is a free generator of the left $\mathcal{A}$-module $\mathcal{C}$.  This means that the Carroll distribution is a free cyclic submodule of $\rho\Der(\mathcal{A})$, and moreover, $\mathcal{C} \cong \mathcal{A}$. The reader should think of a non-singular Carroll distribution as the algebraic generalisation of the classical distribution spanned by a Carroll vector field on a Carrollian manifold, which is a nowhere vanishing vector field that defines the null direction.\par  
For singular Carroll distributions, we have a cyclic submodule, however $\mathsf{a}_\sigma$ is a not a free generator as we can always find a $f \in \mathcal{A}$ non-zero, such that $f \, \mathsf{a}_\sigma =0$ (or similarly for any $X \in \mathcal{C}$). Such Carroll distributions are the algebraic counterparts of a singular Stefan--Sussmann distribution generated by a singular Carroll vector field.  \par 
The dynamics of a Carrollian particle is described as follows. Given a  $\rho$-derivation $X \in \mathcal{C}$ of $G$-degree zero, we can define a flow on  $\mathcal{A}$ via a one-parameter group of $\mathbb{K}$-linear automorphisms:
$$\mathcal{A} \ni  f \mapsto f(t) := \rme^{t X}\, f\,,$$
with $t \in \R$. The exponential is understood as being a formal power series, and the condition on $G$-degree $X$ is essential for ensuring the flow preserves the graded structure of $\mathcal{A}$. This flow we interpret as ``Carrollian motion'' in $\mathcal{A}$, as it is determined by a $\rho$-derivation corresponding to the ``geometric null direction''.  This is, of course, analogous to the classical situation where a Carrollian particle is constrained to be static and only move in the null direction. \par 
An important class of Carrollian manifolds are referred to as stationary Carrollian manifolds; the degenerate metric is time-independent. This notion generalises directly to  Carrollian $\rho$-Lie-Rinehart pair.
\begin{definition}
 A Carrollian $\rho$-Lie-Rinehart pair $(\mathcal{A}, \mathfrak{g}, \mathcal{G}, \mathfrak{l})$ is said to be a \emph{stationary Carrollian $\rho$-Lie-Rinehart pair} if every element of $\mathfrak{l}$ is a Killing section.  
\end{definition}
We need only use a free generator of $\mathfrak{l}$ to test if a Carrollian structure is stationary. More formally, we have the following.\par 
\begin{proposition}
Let  $(\mathcal{A}, \mathfrak{g}, \mathcal{G}, \mathfrak{l})$ be a Carrollian $\rho$-Lie-Rinehart pair and let $\sigma \in \mathfrak{g}_0$ be the free generating element of $\mathfrak{l} \subset \mathfrak{g}$. If $\sigma$ is a Killing section, then $(\mathcal{A}, \mathfrak{g}, \mathcal{G}, \mathfrak{l})$ is a stationary Carrollian $\rho$-Lie-Rinehart pair. 
\end{proposition}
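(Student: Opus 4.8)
The plan is to exploit that $\mathfrak{l}$ is free cyclic: every section $s \in \mathfrak{l}$ has the form $s = f\sigma$ for some $f \in \mathcal{A}$, so it suffices to show that the Killing condition for the generator $\sigma$ propagates to $f\sigma$ for all $f$. Concretely, I would establish the pointwise identity $(\mathcal{L}_{f\sigma}\mathcal{G})(v,w) = f\,(\mathcal{L}_\sigma\mathcal{G})(v,w)$ for all homogeneous $v,w \in \mathfrak{g}$; since the right-hand side vanishes by hypothesis, this forces $\mathcal{L}_{f\sigma}\mathcal{G} = 0$, and as every element of $\mathfrak{l}$ arises this way, the pair is stationary.

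First I would expand $(\mathcal{L}_{f\sigma}\mathcal{G})(v,w)$ using the definition of the Lie derivative of the metric, giving the three terms $\mathsf{a}_{f\sigma}(\mathcal{G}(v,w))$, $-\mathcal{G}([f\sigma,v],w)$, and $-\rho(|f|,|v|)\,\mathcal{G}(v,[f\sigma,w])$, where I have used $|f\sigma| = |f|$. The anchor term is immediate: since $\mathsf{a}$ is a left $\mathcal{A}$-module homomorphism, $\mathsf{a}_{f\sigma} = f\,\mathsf{a}_\sigma$. For the two bracket terms, the key computation is to pull $f$ out of the bracket. Combining $\rho$-skewsymmetry with the $\rho$-Leibniz rule \eqref{eqn:rhoLeiRul} and $|\sigma| = 0$, one obtains
\begin{equation*}
[f\sigma,v] = f\,[\sigma,v] - \rho(|f|,|v|)\,\mathsf{a}_v(f)\,\sigma,
\end{equation*}
and the analogous identity with $w$ in place of $v$.

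The decisive step is that the extra term $\mathsf{a}_v(f)\,\sigma$ lies in $\mathfrak{l} = \ker(\mathcal{G})$, hence is annihilated by $\mathcal{G}$ in either slot; in particular $\mathcal{G}(\sigma,w) = 0$ and, by $\rho$-symmetry, $\mathcal{G}(v,\sigma) = 0$. After discarding these terms, I would apply left $\mathcal{A}$-linearity of $\mathcal{G}$ in the first slot and the derived identity $\mathcal{G}(v,f z) = \rho(|v|,|f|)\,f\,\mathcal{G}(v,z)$ (a consequence of $\rho$-symmetry together with first-slot linearity) in the second slot to factor $f$ out of all remaining terms. The one bookkeeping point is that the commutation factors collapse: the prefactor $\rho(|f|,|v|)$ from the definition multiplies the $\rho(|v|,|f|)$ produced when extracting $f$ from the second slot, and $\rho(|f|,|v|)\rho(|v|,|f|)=1$. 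What remains is precisely $f\big[\mathsf{a}_\sigma(\mathcal{G}(v,w)) - \mathcal{G}([\sigma,v],w) - \mathcal{G}(v,[\sigma,w])\big]$, which equals $f\,(\mathcal{L}_\sigma\mathcal{G})(v,w)$ once one notes $\rho(|\sigma|,|v|)=\rho(0,|v|)=1$.

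I expect the only genuine obstacle to be keeping the commutation factors straight: the cancellations rely crucially on $|\sigma| = 0$, which trivialises several $\rho$-factors, and on the inverse relation $\rho(a,b)=\rho(b,a)^{-1}$. Everything else is a routine application of the bilinearity and module axioms for $\mathcal{G}$ together with the defining properties of the anchor.
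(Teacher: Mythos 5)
Your proposal is correct and follows essentially the same route as the paper's proof: both pull $f$ out of $\mathsf{a}_{f\sigma}$ and out of the brackets via the $\rho$-Leibniz rule, observe that the correction terms $\mathsf{a}_v(f)\,\sigma$ and $\mathsf{a}_w(f)\,\sigma$ lie in $\mathfrak{l}=\ker(\mathcal{G})$ and hence drop out, and use $|\sigma|=0$ together with $\rho(|f|,|v|)\rho(|v|,|f|)=1$ to collapse the commutation factors. Your packaging of the computation as the identity $\mathcal{L}_{f\sigma}\mathcal{G}=f\,\mathcal{L}_{\sigma}\mathcal{G}$ is only a cosmetic reorganisation of the same argument.
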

\begin{proof}
The Killing condition for the generator is $\mathsf{a}_\sigma \big(\mathcal{G}(v,w) \big) = \mathcal{G}([\sigma, v], \sigma) + \mathcal{G}(v, [\sigma, w])$.  Then for $\mathfrak{l} \ni s = f\, \sigma$ we have
\begin{align*}
    \mathsf{a}_s\big(\mathcal{G}(v,w)\big) &= f \, \mathsf{a}_\sigma \big ( \mathcal{G}(v,w)\big) =f \, \big(  \mathcal{G}([\sigma, v], w) + \mathcal{G}(v, [\sigma, w])\big)\\
    & =  \mathcal{G}([f\, \sigma, v], w) + \rho(|f|, |v| )\, \mathcal{G}(v, [f\, \sigma, w])\\
    &-  \rho(|f|, |v|)\, \mathcal{G}(\mathsf{a}_v(f)\, \sigma, v)  -  \rho(|f|, |v|+ |w|)\, \mathcal{G}(v, \mathsf{a}_w(f)\, \sigma) \\
    &= \mathcal{G}([s, v], w) + \rho(|s|, |v| )\, \mathcal{G}(v, [s, w])\,,
\end{align*}
where have used $\mathfrak{l}= \ker\big( \mathcal{G}\big)$.
\end{proof}
As $\mathfrak{l}$ is a submodule of $\mathfrak{g}$, the quotient module $\mathcal{E} := \mathfrak{g}/ \mathfrak{l}$ is well-defined. Via the standard constructions, we understand the quotient module as cosets, i.e., $\mathcal{E} \ni \mathsf{x} = u + \mathfrak{l}$. The addition if defined as $\mathsf{x} + \mathsf{y} = (u + \mathfrak{l}) + (v + \mathfrak{l}) =  u+v + \mathfrak{l}$, and the multiplication is defined as $f \, \mathsf{x} = f \, (u + \mathfrak{l}) =  f \, u + \mathfrak{l}$. However, in general $\mathcal{E}$ is not a $\rho$-Lie algebra as the $\rho$-Lie bracket will not descend to the quotient. For the $\rho$-Lie bracket on $\mathcal{E}$ to be well-defined, it must not depend on the choice of representatives of the coset. This implies that $\mathfrak{l}$ must be an ideal, i.e., $[s, u ]\in \mathfrak{l}$ for all $s \in \mathfrak{l}$ and $u \in \mathfrak{g}$. In general, the ideal condition will not hold. 
\begin{example}[Carrollian Lie algebroids and manifolds]
Carrollian Lie algebroids c.f. \cite{Bruce:2025} are examples of Carrollian $\rho$-Lie-Rinehart pairs with $\mathcal{A} = C^\infty(M)$, $\mathfrak{g} = \Sec(A)$ and  where $\pi :A \rightarrow M$ is a Lie algebroid. Carrollian manifolds are examples of Carrollian Lie algebroids with $A = \sT M$ and its standard Lie algebroid structure.
\end{example}
\begin{example}[The superdomain $\R^{2|2}$]
Consider the supermanifold $\R^{2|2} = \big(\R^2, C^\infty_{\R^2}(-)[\theta^1, \theta^2] \big )$, here $C^\infty_{\R^2}(-)$ is the sheaf of smooth functions on $\R^2$, and $\theta^1, \theta^2$ are formal variables subject to $\theta^\alpha \theta^\beta = - \theta^\beta \theta^\alpha$ ($\alpha, \beta \in \{1,2\}$). Note  that this implies $(\theta^\alpha)^2=0$.  We employ global coordinates $(x^i, \theta^{\alpha}) = (x,y, \theta^1, \theta^2)$, where $x^i$ form a global coordinate system on $\R^2$.  We assign a $\Z_2$-grading $|x^i| = 0$, $|\theta^\alpha| = 1$, and commutation factor $\rho(a,b) = (-1)^{ab}$.  The coordinate derivatives form a basis of the $\Z_2$-graded derivations (vector fields) on $\R^{2|2}$. We then define a degenerate metric via its action on the  basis as
$$\mathcal{G}(\partial_x,\partial_x) =0\,, \quad\mathcal{G}(\partial_y, \partial_y) = 1\,, \quad\mathcal{G}(\partial_{\theta^1}, \partial_{\theta^2}) = - \mathcal{G}(\partial_{\theta^2}, \partial_{\theta^1}) = 1\,,$$
where all other components are zero. Clearly, $\ker{\mathcal{G}}$ is generated by $\partial_x$, and so we have a simple example of a \emph{Carrollian supermanifold}. For details of supermanifolds, see, for example, \cite{Carmeli:2011}.
\end{example}
\begin{example}[The $\Z_2\times\Z_2$-domain $\R^{1|1,2,2}$]
We define the algebra $\mathcal{A} := C^{\infty}(\R)[[z, \zx^1, \zx^2, \theta^1, \theta^2]]$ of formal power series with smooth functions in one real variable as coefficients  in formal variables subject to 
\begin{align*}
    & z\zx^i = - \zx^i z\,, && z \theta^i = \theta^i z \,, && \zx^i \theta^j = \theta^j \zx^i\\
    & \zx^i \zx^j =  - \zx^j \zx^i\,, && \theta^i \theta^j= - \theta^j \theta^i\,, &&
\end{align*}
where $i, j \in \{1,2\}$. This is the algebra of global sections of the $\Z^2_2$-manifold $\R^{1|1,2,2}$ (see \cite{Covolo:2016}). Using coordinates $(x, z, \zx^i, \theta^j)$,  we assign $\Z_2^2 := \Z_2 \times \Z_2$ degrees 
$$|x|= (0,0)\,, \quad |z|= (1,1)\, , \quad |\zx^i| = (0,1)\, , \quad |\theta^j| = (1,0)\,,$$
together with the commutation factor 
$$\rho : \Z_2^2 \rightarrow \R^{\times}\,, \qquad \rho(a, b) = (-1)^{\langle a,b\rangle}\,,$$
where $\langle -, -\rangle$ is the standard inner product on $\Z_2^2$. The $\rho$-derivations, in more classical language, $\Z_2^2$-graded derivations, have free generators $\{\partial_x, \partial_z, \partial_{\zx^i}, \partial_{\theta^j} \}$, of the same $\Z_2^2$-degree as its associated coordinate.  We then consider $(\mathcal{A}, \rho\Der(\mathcal{A}))$ as a $\rho$-Lie-Rinehart pair under the standard $\Z_2^2$-graded commutator and taking the anchor to be the identity map. \par 
We define a metric $\mathcal{G}$ via its non-zero components $\mathcal{G}(\partial_z, \partial_z) =1$, $\mathcal{G}(\partial_{\zx^1}, \partial_{\zx^2}) = -\mathcal{G}(\partial_{\zx^2}, \partial_{\zx^1})$ and $\mathcal{G}(\partial_{\theta^1}, \partial_{\theta^2}) = -\mathcal{G}(\partial_{\theta^2}, \partial_{\theta^1})$. Clearly, $\ker(\mathcal{G}) = \Span_{\mathcal{A}} \{ \partial_x\}$ is spanned by a free generator of $\Z_2^2$-degree $(0,0)$. Thus, we have a Carrollian $\rho$-Lie-Rinehart pair.
\end{example}
\begin{remark}
For further details of  $\Z_2^n$-manifolds and Riemannian structures on them, the reader should consult \cite{Bruce:2020b}.
\end{remark}
\begin{example}[Tensor product construction]
Let $\mathcal{A}$ be an arbitrary $\rho$-commutative algebra equipped with a non-degenerate metric $\mathcal{G}_{\mathcal{A}}$. Assume $\mathcal{B}$ is also a $\rho$-commutative algebra  (with the same commutation factor as $\mathcal{A}$) that is concentrated in $G$-degree zero, and that $\rho\Der(\mathcal{B})$ is a cyclic module. We equip $\mathcal{B}$ with the zero metric.  Following Example \ref{exa:TensProdG}, we define on the tensor product $\mathcal{A}\otimes_{\mathbb{K}}\mathcal{B}$ 
$$\mathcal{G}(D_1, D_2) := \mathcal{G}_{\mathcal{A}\otimes_{\mathbb{K}}\mathcal{B}}(D_1, D_2) = \mathcal{G}_\mathcal{A}(X_1,X_2) \otimes \psi_1 \psi_2\,,$$
with $D_i = X_i \otimes L_{\psi_i} + L_{f_i} \otimes \mathcal{X}_i$, $i = 1,2$.  The kernel of $\mathcal{G}$ are $\rho$-derivations of the form $L_f \otimes \mathcal{X}$. As $\rho\Der(\mathcal{B})$ is cyclic, we have a (non-unique)  generator $\sigma$, and so
$$L_f \otimes \mathcal{X} = (f\otimes \psi)(\Id_{\mathcal{A}} \otimes \sigma)\,.$$
Note $\ker(\mathcal{G})$ has a single generator $(\Id_{\mathcal{A}} \otimes \sigma)$ and so, by definition, it is cyclic, and the generator is $G$-degree zero.  Thus, we have constructed a Carrollian $\rho$-Lie-Rinehart pair. \par 
As we have the canonical inclusion 
$$\rho\Der(\mathcal{B}) \hookrightarrow \rho \Der(\mathcal{A} \otimes_{\mathbb{K}} \mathcal{B})\,, \qquad \mathcal{X} \mapsto (\Id_{\mathcal{A}} \otimes \mathcal{X})\,,$$
the Carrollian distribution is $\mathcal{C} = \mathcal{A}\otimes \rho\Der(\mathcal{B})$.
\end{example}
We can define a metric on the left $\mathcal{A}$-module $\mathcal{E}$ as follows. Consider two sections  $\mathsf{x}, \mathsf{y} \in \mathcal{E}$ together with chosen homogeneous lifts $u,v \in \mathfrak{g}$, i.e., $\pi(u) = \mathsf{x}$ and $\pi(v) = \mathsf{y}$ where $\pi : \mathfrak{g} \rightarrow \mathfrak{g}/\mathfrak{l} = \mathcal{E}$ is the canonical projection. We define a metric as
\begin{equation}\label{eqn:MetOnE}
    \mathcal{G}_{\mathcal{E}}(\mathsf{x}, \mathsf{y}) := \mathcal{G}(u,v)\,.
\end{equation}
For this metric to be well-defined, it should not depend on the chosen lifts. With this in mind, consider homogeneous lifts defined by $u' =  u + s$ and $v' = v + t$, with $s,t \in \mathfrak{l}$.  Then we observe that
$$\mathcal{G}(u',v') =  \mathcal{G}(u + s, v+t)= \mathcal{G}(u,v) + \mathcal{G}(u,t) + \mathcal{G}(s,v) + \mathcal{G}(s,t) = \mathcal{G}(u,v)\,,$$
as $\ker(\mathcal{G}) = \mathfrak{l} \ni s,t$. Thus, the metric $\mathcal{G}_{\mathcal{E}}$ is well-defined.  Recall that if given an $\mathsf{x} \in \mathcal{E}$ such that $\mathcal{G}_{\mathcal{E}}(\mathsf{x}, \mathsf{y}) = 0$ for all $\mathsf{y} \in \mathcal{E}$, means that $\mathsf{x}$ must be the zero section, then the metric is non-degenerate.  Observe that $\mathcal{G}_{\mathcal{E}}(\mathsf{x}, \mathsf{y}) = 0$ implies $\mathcal{G}(u,v) = 0$ for all lifts. As $\ker(\mathcal{G}) = \mathfrak{l}$, it must be the case that $u \in \mathfrak{l}$. Thus, as $\mathsf{x}$ is the projection of $u$ to $\mathfrak{g}/\mathfrak{l}$, it must be the case that $\mathsf{x}$ is the zero section. In short, we have established the following.
\begin{proposition}
Let  $(\mathcal{A}, \mathfrak{g}, \mathcal{G}, \mathfrak{l})$ be a Carrollian $\rho$-Lie-Rinehart pair and let $\mathcal{E}:=  \mathfrak{g}/\mathfrak{l}$ be the quotient module. Then $\mathcal{G}_{\mathcal{E}}$ defined by  \eqref{eqn:MetOnE} is a non-degenerate metric on $\mathcal{E}$.
\end{proposition}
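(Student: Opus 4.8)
The plan is to verify that $\mathcal{G}_{\mathcal{E}}$ is well-defined, to check the three defining axioms of a metric by descent from $\mathfrak{g}$, and then to deduce non-degeneracy directly from the Carrollian hypothesis $\ker(\mathcal{G}) = \mathfrak{l}$.

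First I would settle well-definedness, which is the genuine crux. Given cosets $\mathsf{x}, \mathsf{y} \in \mathcal{E}$ with two choices of homogeneous lift, say $u, u'$ for $\mathsf{x}$ and $v, v'$ for $\mathsf{y}$, the differences $u' - u$ and $v' - v$ lie in $\mathfrak{l}$. Expanding $\mathcal{G}(u', v')$ by $\mathbb{K}$-bilinearity yields $\mathcal{G}(u,v)$ together with three correction terms, each having at least one argument in $\mathfrak{l} = \ker(\mathcal{G})$, so each correction vanishes and the value is independent of the lifts. This is precisely the computation displayed before the statement, and it is the only place where the defining property $\ker(\mathcal{G}) = \mathfrak{l}$ of a Carrollian pair is invoked to guarantee the construction makes sense.

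Next I would check the metric axioms, each inherited from the corresponding property of $\mathcal{G}$ on $\mathfrak{g}$ via the induced grading $|\mathsf{x}| = |u|$ and module structure $f\mathsf{x} = \pi(fu)$ on the quotient. The degree condition reads $|\mathcal{G}_{\mathcal{E}}(\mathsf{x}, \mathsf{y})| = |\mathcal{G}(u,v)| = |u| + |v| = |\mathsf{x}| + |\mathsf{y}|$; the $\rho$-symmetry transports verbatim, since $\mathcal{G}_{\mathcal{E}}(\mathsf{x}, \mathsf{y}) = \mathcal{G}(u,v) = \rho(|u|, |v|)\,\mathcal{G}(v,u) = \rho(|\mathsf{x}|, |\mathsf{y}|)\,\mathcal{G}_{\mathcal{E}}(\mathsf{y}, \mathsf{x})$; and left $\mathcal{A}$-linearity follows because $fu$ is a lift of $f\mathsf{x}$, whence $\mathcal{G}_{\mathcal{E}}(f\mathsf{x}, \mathsf{y}) = \mathcal{G}(fu, v) = f\,\mathcal{G}(u,v) = f\,\mathcal{G}_{\mathcal{E}}(\mathsf{x}, \mathsf{y})$. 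Each of these is routine once well-definedness is in hand.

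Finally, for non-degeneracy I would suppose $\mathcal{G}_{\mathcal{E}}(\mathsf{x}, \mathsf{y}) = 0$ for all $\mathsf{y} \in \mathcal{E}$ and fix any lift $u$ of $\mathsf{x}$. Since $\pi : \mathfrak{g} \rightarrow \mathcal{E}$ is surjective, every $v \in \mathfrak{g}$ is a lift of its own image, so the hypothesis reads $\mathcal{G}(u, v) = \mathcal{G}_{\mathcal{E}}(\mathsf{x}, \pi(v)) = 0$ for all $v \in \mathfrak{g}$; hence $u \in \ker(\mathcal{G}) = \mathfrak{l}$ and therefore $\mathsf{x} = \pi(u) = 0$. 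I do not anticipate a real obstacle: the whole argument is the standard quotient-by-the-radical construction, and the only point requiring care is that the Carrollian identification $\ker(\mathcal{G}) = \mathfrak{l}$ is exactly what simultaneously forces well-definedness and delivers non-degeneracy on the nose.
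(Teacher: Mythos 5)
Your proposal is correct and follows essentially the same route as the paper: well-definedness by expanding $\mathcal{G}(u+s,v+t)$ bilinearly and killing the cross terms via $\ker(\mathcal{G})=\mathfrak{l}$, followed by the same kernel argument for non-degeneracy. The only difference is that you explicitly verify the three metric axioms descend to the quotient, which the paper treats as routine and omits.
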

To develop physics on Carrollian manifolds, a compatible connection is often required. We remark that, although the notion generalises to  Carrollian $\rho$-Lie-Rinehart pairs, the existence of $\rho$-connections, compatible or not, is not, in general, guaranteed.  However, we expect that physically reasonable Carrollian $\rho$-Lie-Rinehart pairs will admit $\rho$-connections.
\begin{definition}
A $\rho$-Lie-Rinehart connection $\nabla$ on a Carrollian $\rho$-Lie-Rinehart pair $(\mathcal{A}, \mathfrak{g}, \mathcal{G}, \mathfrak{l})$ is said to be a \emph{Carroll $\rho$-connection} if it is metric-compatible, i.e., $\nabla\mathcal{G} = 0$ (see \eqref{eqn:CovDerMet}).     
\end{definition}
The condition for a $\rho$-connection to be metric-compatible can be written as
\begin{equation}\label{eqn:MetComCon}
\mathsf{a}_u \big( \mathcal{G}(v,w)\big) =  \mathcal{G}(\nabla_u v, w) + \rho(|u|, |v|)\, \mathcal{G}(v, \nabla_u w)\,,    
\end{equation}
for all homogeneous $u,v,w \in \mathcal{A}$.
\begin{remark}
As the metric in the definition of the Carrollian $\rho$-Lie-Rinehart pair is degenerate, the Koszul formula \eqref{eqn:KosForm} cannot directly be used to construct a Carroll $\rho$-connection.  
\end{remark}
A Carroll $\rho$-connection is metric compatible by definition, and furthermore respects $\mathfrak{l}$ in the following sense.
\begin{proposition}
If $\nabla$ is a Carroll $\rho$-connection on a Carrollian  $\rho$-Lie-Rinehart pair $(\mathcal{A}, \mathfrak{g}, \mathcal{G}, \mathfrak{l})$, then $\nabla$ restricts to $\mathfrak{l}$, i.e., $\nabla_u s \in \mathfrak{l}$ for all $u \in \mathfrak{g}$ and $s \in \mathfrak{l}$.
\end{proposition}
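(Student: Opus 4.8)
The plan is to use the defining property $\mathfrak{l} = \ker(\mathcal{G})$ to reduce the claim to a single vanishing statement: since $\nabla_u s \in \mathfrak{g}$, membership $\nabla_u s \in \mathfrak{l}$ is equivalent to $\mathcal{G}(\nabla_u s, w) = 0$ for all $w \in \mathfrak{g}$. So I would fix homogeneous $u \in \mathfrak{g}$, $s \in \mathfrak{l}$ and an arbitrary homogeneous $w \in \mathfrak{g}$, and feed $s$ into the middle slot of the metric-compatibility identity \eqref{eqn:MetComCon}, i.e.\ take $v = s$ there, obtaining
$$\mathsf{a}_u\big(\mathcal{G}(s,w)\big) = \mathcal{G}(\nabla_u s, w) + \rho(|u|,|s|)\, \mathcal{G}(s, \nabla_u w).$$

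Then I would annihilate two of the three terms using $s \in \mathfrak{l} = \ker(\mathcal{G})$. On the left, $\mathcal{G}(s,w) = 0$ because $s$ sits in the first argument of $\mathcal{G}$, and $\mathsf{a}_u$ applied to the zero function is zero. In the final term, $\nabla_u w$ is again a section in $\mathfrak{g}$, so $\mathcal{G}(s, \nabla_u w) = 0$ by the very same kernel property (again with $s$ in the first argument). What survives is $0 = \mathcal{G}(\nabla_u s, w)$; since $w$ was arbitrary, $\nabla_u s$ lies in $\ker(\mathcal{G}) = \mathfrak{l}$. Passing from homogeneous $u, s, w$ to general ones is immediate by $\mathbb{K}$-bilinearity.

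There is essentially no hard obstacle here; the entire content lies in choosing the substitution $v = s$, after which both the left-hand side and the cross term collapse because the kernel $\ker(\mathcal{G})$ is defined through the first argument of $\mathcal{G}$, which is exactly the slot occupied by $s$ in both surviving pairings. The only point worth recording for cleanliness is that no appeal to the $\rho$-symmetry $\mathcal{G}(x,y) = \rho(|x|,|y|)\,\mathcal{G}(y,x)$ is required, since $s$ never has to be moved into the second argument; this keeps the argument free of the commutation factor apart from the harmless $\rho(|u|,|s|)$ that merely multiplies an already-vanishing term.
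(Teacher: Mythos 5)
Your proof is correct and is essentially identical to the paper's: both substitute $v = s$ into the metric-compatibility identity \eqref{eqn:MetComCon} and use $s \in \mathfrak{l} = \ker(\mathcal{G})$ to kill the left-hand side and the cross term, leaving $\mathcal{G}(\nabla_u s, w) = 0$ for all $w$. Your additional remark that the $\rho$-symmetry of $\mathcal{G}$ is never needed is accurate but does not change the argument.
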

\begin{proof}
Using \eqref{eqn:MetComCon} and setting $v = s \in \mathfrak{l}$ shows that $\mathcal{G}(\nabla_u s, w) = 0$.   As $\ker(\mathcal{G}) = \mathfrak{l}$, it must be the case that $\nabla_u s \in \mathfrak{l}$ for all $u \in \mathfrak{g}$ and $s \in \mathfrak{l}$.
\end{proof}
\subsection{Toy Examples}\label{subsec:ToyExa} In this subsection, we explore two simple examples based on the extended Manin quantum plane and the noncommutative $2$-torus. As well as defining Carrollian structures, we explicitly provide examples of Carroll connections.  We remark that Manin's plane provides a simplified mathematical model for the noncommuting coordinates that characterize a 2d electron gas in a strong magnetic field.  The noncommutative $2$-torus plays a prominent role in noncommutative field theory, matrix theories, and  the dynamics of
some low-energy excitations of open strings in the presence of non-vanishing supergravity fields.   
\begin{example}[The Extended Manin Quantum Plane]
As a toy example, we consider a version of Manin's quantum plane. The algebra that defines the extended quantum plane is $\mathbb{K}_q[x^{\pm1}, y^{\pm1} ]$ of polynomials subject to 
$$xy - q\, y x =0\,, \qquad x^{-1}x = x x^{-1} = \Id\,, \qquad y^{-1}y = y y^{-1} =\Id\,,$$
($q \in \mathbb{K}^\times$) where the other commutation rules can be deduced. This algebra is $\Z^2 := \Z \times \Z$ graded by defining 
\begin{align*}
    &  |\Id| = (0,0)\,, &&|x| = (1,0)\, , & &|y| = (0,1)\,,\\
    & |x^{-1}| = (-1,0)\,, && |y^{-1}| =  (0,-1)\,. &&
\end{align*}
We then define  $\mathbb{K}^2_q := \mathbb{K}_q[x^{\pm1}, y^{\pm1} ]$ with $(\mathbb{K}^2_q)_{n,m} := \Span_{\mathcal{A}}\{x^n y^m\}$. Observe that
$$(x^ny^m)(x^{n'} y^{m'}) =  q^{nm'- mn'}\, (x^{n'}y^{m'})(x^n y^m)\,,$$
and so the commutation factor is $\rho((n,m), (n'm')) = q^{nm'- mn'}$.\par 
The coordinate derivations $\partial_x, \partial_y$ (which are defined algebraically) for a free generating set of $\rho\Der(\mathbb{K}^2_q)$. The grading is 
$$|\partial_x| = (-1,0)\,, \qquad |\partial_y| = (0,-1)\,.$$
We then interpret $(\mathbb{K}^2_q,\rho\Der(\mathbb{K}^2_q) )$ as a $\rho$-Lie-Rinehart pair with the bracket being the $\rho$-commutator and the anchor  is the identity map.   As $x$ and $y$ are units, we can choose two free generators of $\rho\Der(\mathbb{K}^2_q)$ of $\Z^2$-degree $(0,0)$
$$\delta_ x := x \partial_x\,, \qquad \delta_y := y \partial_y\,.$$
We then define a metric via its action on these free generators:
\begin{align*}
&\mathcal{G}(\delta_x, \delta_x) =\Id\, ,\\
& \mathcal{G}(\delta_x, \delta_y) = \mathcal{G}(\delta_y, \delta_x)=0\,,\\
& \mathcal{G}(\delta_y ,\delta_y) = 0\,.
\end{align*}
As there are no non-zero divisors, $\ker(\mathcal{G})  = \Span_{\mathcal{A}} \{ \delta_y \}$, and so the kernel is a free cyclic submodule generated by the $G$-degree $(0,0)$ $\rho$-derivation. Thus, we have a Carrollian $\rho$-Lie-Rinehart pair.  The Carroll distribution is identified with $\ker(\mathcal{G})$ and is non-singular.\par 
As an example of a Carroll $\rho$-connection, we could pick the trivial connection. We will choose a connection in which there is a non-zero component in the kernel of $\mathcal{G}$: 
$$\nabla_{\delta_x}\delta_x :=  \delta_y\,, \qquad \nabla_{\delta_x} \delta_y = \nabla_{\delta_y} \delta_x= \nabla_{\delta_y} \delta_y := 0\,.$$
The only non-trivial condition to check is
$$\delta_x \left(\mathcal{G}(\delta_x, \delta_x) \right) \stackrel{?}{=} \mathcal{G}(\nabla_{\delta_x}\delta_x, \delta_x) + \mathcal{G}(\nabla_{\delta_x} \delta_x, \delta_x)\,.$$
Evaluating the left-hand side: 
$$\delta_x \left(\mathcal{G}(\delta_x, \delta_x) \right) = x\partial_x (\Id) = 0\,.$$ \\
Evaluating the right-hand side: 
$$\mathcal{G}(\nabla_{\delta_x}\delta_x, \delta_x) + \mathcal{G}(\nabla_{\delta_x}\delta_x, \delta_x) = 2\mathcal{G}(\nabla_{\delta_x}\delta_x, \delta_x)  =  2\mathcal{G}(\delta_y, \delta_x)  = 0\,,$$
as $\delta_y \in \ker(\mathcal{G})$.  Thus, we have a Carroll $\rho$-connection. Clearly, this connection is torsion-free. As for the curvature, the $\rho$-skewsymmetry of the curvature $\rho$-tensor means we need only check the two components $R(\delta_x, \delta_y)\delta_x$ and $R(\delta_x, \delta_y)\delta_y$.  
\begin{align*}
&R(\delta_x, \delta_y)\delta_x =  \nabla_{\delta_x}(\nabla_{\delta_y} \delta_x) - \nabla_{\delta_y} (\nabla_{\delta_x} \delta_x) - \nabla_{[\delta_x, \delta_y]}\delta_x =  - \nabla_{\delta_y} \delta_y =0\,,\\
&R(\delta_x, \delta_y)\delta_y = \nabla_{\delta_x}(\nabla_{\delta_y} \delta_y)- \nabla_{\delta_y}(\nabla_{\delta_x} \delta_y) - \nabla_{[\delta_x, \delta_y]}\delta_y=0\,.
\end{align*}
Thus, the Carroll $\rho$-connection is flat. 
\end{example}
\begin{example}[Canonical action $\rho$-Lie-Rinehart pair of the Noncommutative $2$-torus]\label{exa:ActLieRin} The $C^*$-algebra of functions on the noncommutative $2$-torus is generated (over $\C$) by two elements $u,v$ subject to 
$$uv  - \rme^{2\pi \rmi \, \theta}\, vu =0\,,$$
with $\theta \in \R$. We assign a $\Z^2 = \Z \times \Z$  grading by setting $|\Id| = (0,0)$, $|u| =(1,0)$ and $|v| = (0,1)$. The appropriate commutation factor is
$$\rho((n,m), (n',m')) =  \rme^{2\pi \rmi \, \theta(nm' -m'n)}\,.$$
The algebra is defined  as 
$$\mathcal{A}_\theta =\displaystyle \bigoplus_{n,m \in \Z} (\mathcal{A}_\theta)_{n,m}\,,$$
with $(\mathcal{A}_\theta)_{n,m}$ is the one-dimensional vector space over $\C$ of monomials of the form $u^nv^m$, noting $u^{-1} = u^*$ and $v^{-1} = v^*$. The coordinate derivations are $\partial_u$ and $\partial_v$ of $\Z^2$-degrees $(-1,0)$ and $(0,-1)$, respectively.  As $u$ and $v$ are units, the $\rho$-derivations are spanned by two free generators of $\Z^2$-degree $(0,0)$ given by 
$$\delta_u := 2\pi \rmi \,u  \partial_u\,, \qquad \delta_v := 2\pi \rmi \, v \partial_v\,.$$
 Moreover, these $\rho$-derivations are the infinitesimal generators of the canonical action of the classical $2$-torus $\mathbb{T} = S^1 \times S^1$ on the noncommutative 2-torus. The Lie algebra of the classical $2$-torus is $\R^2$ with the trivial Lie bracket. The infinitesimal action is 
\begin{align*}
    \alpha &  :  \R^2 \longrightarrow \rho\Der(\mathcal{A}_\theta)\,,\\
    & (a,b) \mapsto a \, \delta_u + b \,  \delta_v\,.
\end{align*}
Given that the Lie algebra $\R^2$ is trivial and that the generators $\delta_u$ and $\delta_v$ are degree $(0,0)$ and commute, we have a Lie algebra homomorphism. The canonical action $\rho$-Lie-Rinehart pair is 
$$(\mathcal{A}_\theta, A_\theta \otimes_{\C}\R^2)\,.$$
Sections are identified with pairs of functions $f_u$ and $f_v$, and collectively  $\mathbf{f}:= (f_u, f_v)$.  Then the anchor map and $\rho$-Lie bracket are given by 
\begin{align*}
    & \mathsf{a}_{\mathbf{f}} :=   f_u \delta_u + f_v \delta_v\,,\\
  & [\mathbf{f}, \mathbf{g}] := \big(  [\mathbf{f}, \mathbf{g}]_u,[\mathbf{f}, \mathbf{g}]_v \big)\,,
\end{align*}
with
$$ [\mathbf{f}, \mathbf{g}]_i := \sum_{j = \{u,v\}} \big( f_j\delta_j(g_i) -  \rme^{2\pi \rmi \, \theta(nm' -m'n)} \, g_j \delta_j(f_i)\big)\,,$$
where $|\mathbf{f}| = (n,m)$ and  $|\mathbf{g}| = (n',m')$. Extension to inhomogeneous sections is via linearity. \par 
We define a metric on the canonical action $\rho$-Lie-Rinehart pair of the noncommutative $2$-torus as 
$$\mathcal{G}(\mathbf{f}, \mathbf{g}) := f_v g_v\,.$$
As there are no non-zero divisors in $\mathcal{A}_\theta$, $\ker(\mathcal{G}) =  \Span_{\mathcal{A}_\theta}\{ (1,0) \}$. We set $\mathfrak{l} := \{ \mathbf{f} = (f_u,0 )~~|~~f_u \in \mathcal{A} \}$, which is a free cyclic submodule of $\mathcal{A}_\theta\otimes_{\C}\R^2$. Thus, we have constructed a Carrollian $\rho$-Lie-Rinehart pair.  The Carroll distribution is $\mathcal{C} := \mathsf{a}(\mathfrak{l}) \ni f_u \delta_u$, which is a free cyclic submodule of $\rho\Der(\mathcal{A}_\theta)$ and so is non-singular. \par 
As an example of a Carroll connection, we choose the trivial connection 
$$\nabla_{\mathbf{f}}\mathbf{g} := (\mathsf{a}_\mathbf{f}(g_u), \mathsf{a}_\mathbf{f}(g_v) ) =: \mathsf{a}_{\mathbf{f}}(\mathbf{g})\,.$$
We need to check 
$$\mathsf{a}_{\mathbf{f}}\big( \mathcal{G}(\mathbf{g}, \mathbf{h}) \big) \stackrel{?}{=} \mathcal{G}(\nabla_{\mathbf{f}}\mathbf{g}, \mathbf{h}) + \rho(|\mathbf{f}|, |\mathbf{g}|)\, \mathcal{G}(\mathbf{g}, \nabla_{\mathbf{f}}\mathbf{h})\,.$$
Evaluating the left-hand side: 
$$\mathsf{a}_{\mathbf{f}}\big( \mathcal{G}(\mathbf{g}, \mathbf{h}) \big) = \mathsf{a}_{\mathbf{f}}(g_v h_v)=\mathsf{a}_{\mathbf{f}}(g_v)h_v + \rho(|\mathbf{f}|, |\mathbf{g}|)\, g_v \mathsf{a}_{\mathbf{f}}(h_v)\,.$$
Evaluating the right-hand side: 
$$\mathcal{G}(\nabla_{\mathbf{f}}\mathbf{g}, \mathbf{h}) +\rho(|\mathbf{f}|, |\mathbf{g}|)\, \mathcal{G}(\mathbf{g}, \nabla_{\mathbf{f}}\mathbf{h}) = \mathsf{a}_{\mathbf{f}}(g_v)h_v + \rho(|\mathbf{f}|, |\mathbf{g}|)\, g_v \mathsf{a}_{\mathbf{f}}(h_v)\,.$$
Thus, we do indeed have a Carroll connection. As we have a trivial connection, it is defined by the anchor alone; the curvature is zero. The connection is clearly torsion-free. 
\end{example}
\begin{remark}
Example \ref{exa:ActLieRin} is formally equivalent to the canonical action $\rho$-Lie-Rinehart pair of the quantum Euclidean group $E_q(2)$. Recall that $E_q(2)$ is described by generators $\bar{v}, v, \bar{t}, t$ subject to 
\begin{align*}
    & \bar{v} v = v \bar{v} = \Id\,, && t \bar{t} = q^2\,  \bar{t}t\,,\\
    & v t = q^2 \, t v\,, && \bar{v}t = q^{-2}\, t \bar{v}\,,
\end{align*}
$q \in \R^\times$. Other relations follow using $v^* = \bar{v}$ and $t^* = \bar{t}$. We can assign a $\Z \times \Z$ grading 
$$|\Id| = (0,0)\,, \quad |v| = (-1,1)\,, \quad |\bar{v}| = (1,-1)\,, \quad |t| =(0,1)\,, \quad |\bar{t}| = (1,0)\,,$$
and define the commutation factor as $\rho((n,m), (n',m')) := q^{-2(nm' - n'm)}$. As $v$ and $\bar{v}$ are units, we have a pair of degree $(0,0)$ $\rho$-derivations,
$$\delta_v := 2 \pi \rmi \, v \partial_v\, \qquad \delta_{\bar{v}} := 2 \pi \rmi \, \bar{v} \partial_{\bar{v}}\,,$$
which generate an infinitesimal action of the classical $2$-torus on $E_q(2)$. The associated $\rho$-Lie-Rinehart pair and along with the Carrollian structure, can be constructed via minor modification of the quantum $2$-torus constructions.
\end{remark}
%
%
\section{Concluding Remarks}
We have generalised Carrollian geometry to almost commutative geometry via $\rho$-Lie-Rinehart pairs, which are themselves a natural generalisation of Lie algebroids. It is shown that the core tenets of Carrollian geometry extend to the almost commutative setting. This opens up the study of noncommutative Carrollian geometry and physics via $\rho$-commutative algebras and $\rho$-Lie-Rinehart pairs.  The question of the existence of compatible $\rho$-connections has not been addressed. We do not expect that an arbitrary  Carrollian $\rho$-Rinehart pair admits a Carroll $\rho$-connection. This is in stark contrast to the classical case of Carrollian manifolds and Lie algebroids, where Carroll connections always exist (see \cite{Bruce:2025}). Thus, we expect that physically relevant examples of  Carrollian $\rho$-Lie-Rinehart pairs will admit Carroll $\rho$-connections while general examples do not. The construction of physically motivated examples is now paramount and is essential in addressing this question. \par 
 We briefly suggest possible future areas of exploration following on from this work.
 \medskip 

\noindent \textbf{Holography and Horizons:}  Given how Carrollian physics is expected to shed light on flat space holography (see \cite{Donnay:2022,Donnay:2023}, for example), it is natural to ask how noncommutativity modifies boundary symmetries and charges.  For instance, how does almost commutativity change the  Bondi-Metzner-Sachs (BMS) symmetry algebra at null infinity? Moreover, as horizons are prototypical examples of Carrollian manifolds, we expect that quantum horizons can be studied within the $\rho$-commutative setting. The $\rho$-Lie-Rinehart pair approach is intrinsic, suggesting that one should develop an ultra-relativistic limit within almost commutative Riemannian geometry, and compare it with the standard commutative case.  \\
\noindent \textbf{Other Approaches:} It is important to relate the $\rho$-commuting framework to complementary noncommutative Carrollian constructions, which remain underdeveloped. Specifically, Inönü--Wigner contractions of $\kappa$-deformed spacetimes and their symmetries, see \cite{Ballesteros:2020,Ballesteros:2020b,Ballesteros:2023,Bose:2025,Trześniewski:2024}, provide a distinct approach.  Establishing overlap or clear links between $\kappa$-deformations and $\rho$-commutativity is nontrivial, but potentially illuminating. Another direction is to seek a spectral triple-like approach (c.f. Connes \cite{Connes:1994}) to degenerate metrics and Carrollian structures.\\
\noindent\textbf{Condensed Matter Physics:} Carrollian physics has been applied to various aspects of condensed matter physics and fluid dynamics (see \cite[Part III]{Bagchi:2025}). Fractons, for instance, are an emergent quasiparticle with restricted mobility; they can only move in bound pairs, and such quasiparticles can be studied from the perspective of Carrollian physics (see \cite{Figueroa-O’Farrill:2023b}). This points to potential applications of noncommutative Carrollian geometry.
%
%
\section*{Acknowledgements}
The author thanks Péter Horváthy for his interest in this work and encouragement. Cordial thanks are extended to the anonymous referees for their comments on earlier drafts of this work. 
%
%

\end{document}